\newcommand {\mm}[1] {\ifmmode{#1}\else{\mbox{\(#1\)}}\fi}
\newcommand{\ignore}[1]{}
\newsavebox{\smallProofsym}                 
\newcommand{\Rspace}        {\mm{{\mathbb R}}}
\newcommand{\Oh}            {\mm{O}}
\newcommand{\ee}            {\mm{\varepsilon}}
\newcommand{\qedhere}{\hfill$\qed$}
\newcommand{\Skip}[1]       {}
\begin{document}
\title{On the MST-ratio: Theoretical Bounds and Complexity of Finding the Maximum}
\titlerunning{On the MST-ratio}
% If the paper title is too long for the running head, you can set
% an abbreviated paper title here
%

\author{Afrouz Jabal Ameli\inst{1}\orcidID{0000-0002-9823-683} \and
Faezeh Motiei\inst{1}\orcidID{0009-0006-8805-9307} \and
Morteza Saghafian\inst{2}\orcidID{0000-0002-4201-5775}}
\authorrunning{A. Jabal Ameli, F. Motiei, M. Saghafian}
% First names are abbreviated in the running head.
% If there are more than two authors, 'et al.' is used.
%
\institute{Eindhoven University of Technology \\
5600MB Eindhoven, the Netherlands\\
\email{\{a.jabal.ameli,f.motiei\}@tue.nl}\\
\url{www.tue.nl} \and
IST Austria (Institute of Science and Technology Austria) \\ Kloster\-neu\-burg, Austria\\
\email{morteza.saghafian@ist.ac.at}\\
\url{www.ist.ac.at}
% \\
% \funding{\footnotesize
  % Work by the third author is partially supported by the European Research Council (ERC), grant no.\ 788183, and by the Wittgenstein Prize, Austrian Science Fund (FWF), grant no.\ Z 342-N31.
% }
}
\maketitle              % typeset the header of the contribution
\begin{abstract}
% The abstract should briefly summarize the contents of the paper in
% 150--250 words.
Given a finite set of red and blue points in $\Rspace^d$, the MST-ratio is defined as the total length of the Euclidean minimum spanning trees of the red points and the blue points, divided by the length of the Euclidean minimum spanning tree of their union. The MST-ratio has recently gained attention due to its direct interpretation in topological models for studying point sets with applications in spatial biology. The maximum MST-ratio of a point set is the maximum MST-ratio over all proper colorings of its points by red and blue. We prove that finding the maximum MST-ratio of a given point set is NP-hard when the dimension is part of the input.  Moreover, we present a quadratic-time $3$-approximation algorithm for this problem. As part of the proof, we show that, in any metric space, the maximum MST-ratio is smaller than $3$. Additionally, we study the average MST-ratio over all colorings of a set of $n$ points. We show that this average is always at least $\frac{n-2}{n-1}$, and for $n$ random points uniformly distributed in a $d$-dimensional unit cube, the average tends to $\sqrt[d]{2}$ in expectation as $n$ approaches infinity.

\keywords{Minimum Spanning Tree \and NP-hardness \and Discrete and Computational Geometry \and Approximation Algorithms}
\end{abstract}
\section{Introduction}\label{sec:1}
    Recently, motivated by applications in spatial biology, Cultrera et al. \cite{CDES22} studied the interactions between color classes in a colored point set from a topological point of view. To this end, they developed a framework based on the \textit{chromatic Delaunay mosaic} and explored its combinatorial and topological properties; see also \cite{BCDES22}. Moreover, they introduced the concept of \textit{MST-ratio} as one of the measures for the mingling of points with different colors in a colored point set. Investigating this measure opens the door to interesting discrete geometry questions related to Euclidean minimum spanning trees (EMST), which is the main focus of this paper. For a set $P$ of $n$ points in the Euclidean space $\Rspace^d$, an \textit{Euclidean minimum spanning tree} of $P$, denoted by $\text{EMST}(P)$, is a minimum spanning tree (MST) of the complete geometric graph on $P$ where the weight of each edge is the Euclidean distance between its endpoints. For a partition $P=R \cup B$ of $P$ into red and blue points, the EMST-ratio (sometimes referred to as the MST-ratio) can be defined as follows.
    % of $P$ and $R$ is the total length of an Euclidean minimum spanning tree for each of $R$ and $B$, divided by the length of an Euclidean minimum spanning \textcolor{red}{tree} of $P$:\faezeh{Do we want to remove this sentence and only keep the formula? Also, for the formula, shouldn't we say $B=P\setminus R$ since we define it for P and R?}
    \begin{equation*}
    \mu(P,R) = \frac{|\text{EMST}(R)|+|\text{EMST}(P\setminus R)|}{|\text{EMST}(P)|}.
    \end{equation*}

where $|\text{EMST}(X)|$ for a point set $X$ is the length of an Euclidean minimum spanning tree of $X$. How much longer can EMSTs of two finite sets be compared to an EMST of their union? Given a point set $P$, we are
interested in the maximum ratio, $\gamma(P)$, over all proper partitionings of $P$ into two sets. The upper and lower bounds for $\gamma$ of certain classes of point sets can be found in \cite{CDES24,DPT23}, see Section~\ref{subsec:related} for further details. However, limited research has been conducted on the maximum EMST-ratio in higher-dimensional Euclidean spaces.
Moreover, the question of whether there is an efficient algorithm to compute $\gamma$ for a given point set remains unanswered.

The question can also be raised in the abstract setting. Namely, given a weighted complete graph $G$ with positive weights and a bi-partition $V(G)=R\cup B$ of the vertices of $G$, we define the MST-ratio of this partition as:
\begin{equation*}
    \mu(G,R) = \frac{|\text{MST}(G[R])|+|\text{MST}(G[P\setminus R])|}{|\text{MST}(G)|},
\end{equation*}
where $G[R]$ and $G[B]$ are the induced subgraphs of $G$ on the vertex sets $R$ and $B$, respectively, and $|H|$ is the total weight of the edges of the graph $H$. The maximum MST-ratio of $G$, denoted by $\gamma(G)$, is the maximum ratio over all proper bi-partitions of $V(G)$. 

In the \textit{MAX-MST-ratio} problem, we aim to find a partition $V(G)=R\cup B$~of the vertices of $G$ that maximizes the MST-ratio. As for the geometric setting, we introduce the \textit{MAX-EMST-ratio} problem, in which given dimension $d$ and a point set $P$ in $\Rspace^d$, we aim to find the maximum EMST-ratio of $P$. When $d$ is a fixed number, we call the problem \textit{d-MAX-EMST-ratio}.
% the complete geometric graph on $P$, with Euclidean distances as weights of edges.

Our work establishes new bounds on the maximum and average MST-ratio. We also analyze the computational complexity of MAX-MST-ratio and study~it through the lens of approximation algorithms. By an $\alpha$-approximation ($\alpha~\ge~1$) for MAX-MST-ratio, we mean a polynomial-time algorithm that for every weighted graph $G$ provides a bi-partition $V(G)=R\cup B$ satisfying $\alpha \cdot \mu(G,R)\ge \gamma(G)$.

We summarize our main results as follows:

\begin{itemize}
    \item The MAX-MST-ratio problem is NP-hard and hard to approximate even within a factor of $O(n^{1-\varepsilon})$ (Theorems~\ref{thm:graphsuperhard}).
    \item The MAX-EMST-ratio problem is NP-hard (Theorem~\ref{thm:emstnphard}).
    \item In any metric space, the maximum MST-ratio is smaller than $3$ (Theorem~\ref{thm:metricupperbound}).
    \item There exists an $\Oh(n^2)$ time $3$-approximation algorithm for the MAX-EMST-ratio problem (Theorem~\ref{thm:emstapprox}).
    \item For every set of $n$ points in Euclidean space, the average EMST-ratio over all colorings is at least $\frac{n-2}{n-1}$ (Theorem~\ref{thm:averagemst}).
    \item The average EMST-ratio for a set of $n$ random points uniformly distributed in $[0,1]^d$ tends to $\sqrt[d]{2}$ as $n$ goes to infinity (Theorem~\ref{thm:randomaverage}).
\end{itemize}

\subsection{Related Work}\label{subsec:related}
The maximum EMST-ratio is closely related to the Steiner ratio of the Euclidean space. Recall that the Steiner tree of a point set is a tree that connects the points via segments of minimum total length, allowing using some extra points. The Steiner ratio, $\rho_d$ of $\Rspace^d$, is the infimum over all finite point sets in $\Rspace^d$ of the length of the Steiner tree divided by the length of an EMST of the set. Figure~\ref{fig:Steinerexample} shows an example of three points that form an equilateral triangle of side length $1$ with Steiner ratio $\sqrt{3}/2$. Gilbert and Pollack \cite{GP68} conjectured that this is the most extreme example in the plane and therefore $\rho_2 = \sqrt{3}/2 \approx 0.866$. The best-known lower bound for $\rho_2$ is $0.824...$, due to Chung and Graham \cite{CG85}. It is also important for us to have a universal lower bound on $\rho_d$ that holds for all $d$. Gilbert and Pollack \cite{GP68} presented a short proof for $\rho_d \geq 0.5$. Later, Graham and Hwang \cite{GH76} showed $\rho_d \geq 0.577$ for any $d$.
% \vspace{-0.1cm}
\begin{figure}[ht]
    \centering
    \begin{tikzpicture}[scale=0.6]

                % Node styles
                \tikzset{black dot/.style={draw=black, very thick, circle,minimum size=0pt, inner sep=1pt, outer sep=1pt,fill=black}}
                \tikzset{terminal/.style={draw=black,  thick,minimum size=0pt, inner sep=2.5pt, outer sep=1pt}}
                \tikzset{P node/.style={fill={rgb,255: red,20; green,154; blue,0}, draw={rgb,255: red,20; green,154; blue,0}, circle, minimum size=0pt,inner sep=1pt, outer sep=1pt}}
            
                % Edge styles
                \tikzstyle{witness edge}=[-, draw={rgb,255: red,195; green,0; blue,3}, very thick]
                \tikzstyle{T edges}=[-, very thick]
                \tikzstyle{new witness}=[-, draw={rgb,255: red,195; green,0; blue,3}, dashed, very thick]
                \tikzstyle{connected terminals}=[-, draw=black, dashed, very thick]
                \tikzstyle{P}=[-, draw={rgb,255: red,20; green,154; blue,0}, very thick]

            \node [style=terminal] (u) at (-6, 7.5) {};
            \node [style=terminal] (v) at (-4, 4) {};
            \node [style=terminal] (w) at (-8, 4) {};
            \node [style=black dot] (c) at (-6, 5.3) {};

            \draw [style=P] (u) to (c);
            \draw [style=P] (v) to (c);            
            \draw [style=P] (w) to (c);
            \draw  [style=T edges] (w) to (v);
            \draw  [style=T edges] (w) to (u);
\end{tikzpicture}
    \caption{The point set $P$ shown by square nodes is $3$ vertices of an equilateral triangle of side length $1$. The black edges form $\text{EMST}(P)$ with length $2$. Adding an extra vertex in the center, the green edges form the minimum Steiner tree of $P$ with length $\sqrt{3}$.}
    \label{fig:Steinerexample}
\end{figure}
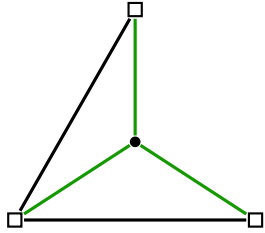
% \vspace{-0.1cm}

Using the above bounds for $\rho_2$, Cultrera et al. \cite{CDES24} showed that the supremum over all point sets in $\Rspace^2$ of the maximum EMST-ratio is between $2.154$ and $2.427$. It is not hard to see that the infimum of the maximum EMST-ratio is $1$. Their attention then shifted to lattice point sets, where they introduced the MST-ratio (despite the infinite number of points). They subsequently demonstrated that the infimum and supremum of the maximum MST-ratios across all $2$-dimensional lattices are $1.25$ and $2$, respectively. 

Dumitrescu et al. \cite{DPT23} proved that for any set $P$ of at least $12$ points in $\Rspace^2$, it holds that $\gamma(P) > 1$. They also showed that for $n$ points sampled uniformly at random in $[0,1]^2$, the maximum EMST-ratio is at least $\sqrt{2} - \ee$, for every $\ee > 0$, with a probability that tends to $1$ as $n$ goes to infinity. A tighter analysis shows that in the $d$-dimensional unit cube, the expected value of the average EMST-ratio over all colorings tends to $\sqrt[d]{2}$ as $n$ goes to infinity, see Theorem~\ref{thm:randomaverage}. The proof relies on the classic result by Beardwood, Halton, and Hammersley \cite{BHH59}, saying that the length of an MST of a set $P$ of $n$ points uniformly distributed in $[0,1]^d$ (or any bounded region of volume $1$ in $\Rspace^d$) satisfies
\begin{align}
\label{equ:beta}
   |\text{EMST}(P)|/n^{1-1/d} \rightarrow{} \beta(d)
\end{align}

with probability $1$, where $\beta(d) > 0$ is a constant depending only on the dimension. Best known lower and upper bounds for $\beta(2)$ are approximately $0.6$ by Avram and Bertsimas \cite{AvBe92} and $0.707$ by Gilbert \cite{Gil65}, respectively.

%\afrouz{Inja ye section useful defintions and notations bezarim, OPT, $w_e$ (vazne yaal), value ye coloring inaro begim?}
%\morteza{Bebin aksare notationa tarif shodan already, inai ke gofti ham mishe avvale esbat haye marboote goft, vali age marsoome mitoonim bezarim.}

\subsection{Outline}

Section~\ref{sec:abstract} focuses on abstract graphs and presents complexity results on the MAX-MST-ratio problem. Section~\ref{sec:geometry} addresses the geometric setting, establishing bounds for the maximum MST-ratio and providing complexity results for MAX-EMST-ratio problem. Section~\ref{sec:average} explores the average EMST-ratio across all colorings of a point set. Section~\ref{sec:discussion} concludes with open questions to deepen understanding of the MST-ratio. Section~\ref{sec:experiments} presents experimental results regarding the maximum, average, and bipartite EMST-ratio for random point sets.

\section{Maximum MST-ratio of Abstract Graphs}\label{sec:abstract}
In this section, we study the complexity of finding the maximum MST-ratio for abstract graphs. Throughout this section and the subsequent sections, we call an edge \textit{colorful} if the two endpoints of it have different colors; otherwise, we call it \textit{monochromatic}. 

As the main result of this section, we show that not only MAX-MST-ratio is NP-hard, but it is extremely hard to approximate. In Section~\ref{sec:geometry}, we will use the NP-hardness of the MAX-MST-ratio to demonstrate that the geometric version of the problem (i.e., the MAX-EMST-ratio) is also NP-hard. Theorem~\ref{thm:graphsuperhard} presents a strong inapproximability result for the MAX-MST-ratio problem.

% \begin{theorem}\label{thm:graphnphard}
%     The problem MAX-MST-ratio is NP-hard \textcolor{red}{even if the weights belong to $\{1,2\}$}.
% \end{theorem}

% The proof of Theorem~\ref{thm:graphnphard} appears in Section~\ref{subsec:graphnphardproof} of the Appendix. 

\begin{theorem}\label{thm:graphsuperhard}
    For every $0<\varepsilon\le 1$, there is no polynomial-time $O(n^{1-\varepsilon})$-approximation with weights restricted to $1$ and $n$ for MAX-MST-ratio problem, unless P=NP.
\end{theorem}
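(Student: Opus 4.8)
The plan is to reduce from a known NP-hard problem with a strong inapproximability profile, where a "yes" instance forces a large MST-ratio and a "no" instance forces a ratio close to $1$. The natural candidate to encode via the MST-ratio is a coloring/partition problem, and the weight restriction to $\{1, n\}$ suggests a threshold-style gadget: cheap edges (weight $1$) represent a graph's actual adjacency, and expensive edges (weight $n$) represent non-adjacency in the complete graph $G$. Under this encoding, $\mathrm{MST}(G)$ will typically use only weight-$1$ edges if the underlying cheap-edge graph is connected, so $|\mathrm{MST}(G)| = n-1$. The key is that $|\mathrm{MST}(G[R])| + |\mathrm{MST}(G[B])|$ blows up precisely when a color class is disconnected in the cheap-edge graph and must use weight-$n$ edges to span itself.

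\textbf{Reduction.} First I would take a graph $H$ on $n$ vertices from an NP-hard independent-set- or coloring-type problem and build the complete weighted graph $G$ by assigning weight $1$ to edges of $H$ and weight $n$ to non-edges, arranging that the weight-$1$ subgraph is connected so that $|\mathrm{MST}(G)| = n-1$. The intended correspondence is: a bipartition $V = R \cup B$ yields a large ratio iff at least one color class induces a disconnected cheap-edge subgraph, forcing that class's MST to include an edge of weight $n$. A single forced weight-$n$ edge already gives numerator at least $n + (\text{spanning-tree cost})$, so $\mu(G,R) \ge \frac{n + \Theta(n)}{n-1} = \Omega(1)$ in the worst case, but to reach the $n^{1-\varepsilon}$ gap I would need the numerator to scale like $n^{2-\varepsilon}$ while the denominator stays $\Theta(n)$: this means forcing roughly $n^{1-\varepsilon}$ many weight-$n$ edges into the induced MSTs. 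So the gadget must make a "yes" instance admit a coloring where one class shatters into $\Omega(n^{1-\varepsilon})$ cheap-connected components (each extra component costs one weight-$n$ spanning edge), while a "no" instance keeps both classes cheap-connected so the numerator stays $\Theta(n)$ and $\mu = \Theta(1)$.

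\textbf{Completeness and soundness.} Second I would prove the two directions: (i) from a "yes" instance of the source problem, exhibit the coloring producing many cheap-disconnected components and lower-bound the numerator by counting the forced weight-$n$ bridges, giving $\gamma(G) \ge \Omega(n^{2-\varepsilon}/n)$; (ii) from a "no" instance, show every bipartition keeps both induced cheap-subgraphs connected (or nearly so), upper-bounding each induced MST by $O(n)$ and hence $\gamma(G) = O(1)$. The inapproximability then follows: an $O(n^{1-\varepsilon})$-approximation would distinguish the two regimes, solving the NP-hard source problem in polynomial time. The denominator control, $|\mathrm{MST}(G)| = n-1$ regardless of the coloring, is the clean invariant that makes the ratio track the numerator directly.

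\textbf{Main obstacle.} The hardest part will be engineering a single source problem whose "yes"/"no" gap in the number of forced components is genuinely $n^{1-\varepsilon}$ for every $\varepsilon$, rather than a fixed constant — this is what separates mere NP-hardness from the strong $O(n^{1-\varepsilon})$-inapproximability. I expect this to require reducing from a problem with an established $n^{1-\varepsilon}$ hardness of approximation (such as maximum independent set / clique, or a chromatic-number variant) and carefully translating its approximation gap into the component-count gap of the cheap-edge subgraph, while simultaneously respecting the rigid $\{1,n\}$ weight constraint and keeping the cheap subgraph connected so the denominator stays fixed. Verifying that no clever "mixed" coloring can cheat the intended gap — i.e. that the soundness bound is robust against all $2^n$ bipartitions and not just the intended one — is the delicate combinatorial heart of the argument.
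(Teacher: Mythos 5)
Your architecture is, up to complementation, the same as the paper's: weights in $\{1,n\}$ encoding adjacency of a source graph, hardness imported from the $n^{1-\varepsilon}$-inapproximability of clique/independent set (Zuckerman), and the value $|\mathrm{MST}(R)|+|\mathrm{MST}(B)|$ tracked by counting forced weight-$n$ edges (the denominator is coloring-independent, so your effort to pin it at $n-1$ is harmless but unnecessary --- an $\alpha$-approximation of the ratio is automatically an $\alpha$-approximation of the value). But the step you defer as the ``delicate combinatorial heart'' is the entire content of the soundness direction, and as you state it (``every bipartition keeps both induced cheap-subgraphs connected, or nearly so'') it is not the provable statement: nothing prevents a coloring in a ``no'' instance from shattering a color class into several cheap components; what you must show is that heavy shattering certifies a large solution of the source problem. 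The missing lemma is the extraction argument used by the paper: given any coloring of value $k$, the two monochromatic MSTs have $n-2$ edges in total, hence exactly $\lfloor k/n\rfloor$ edges of weight $n$; w.l.o.g.\ the red MST $R_Q$ contains at least half of them, so the weight-$1$ edges of $R_Q$ form a forest with $r\ge \lfloor k/n\rfloor/2+1$ components; picking one vertex $u_i$ per component, every pair $u_iu_j$ must have weight $n$ --- otherwise adding the weight-$1$ edge $u_iu_j$ to $R_Q$ closes a cycle through a weight-$n$ edge, contradicting minimality of $R_Q$ --- so $\{u_1,\dots,u_r\}$ spans only expensive edges, i.e.\ is a clique of the source graph (paper's orientation) or an independent set (yours) of size at least $\lfloor k/n\rfloor/2+1$.

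With this lemma your main obstacle dissolves: you do not need to engineer a source problem whose yes/no gap in component counts is $n^{1-\varepsilon}$ for every $\varepsilon$. The two directions --- a clique/independent set of size $\ell\ge 3$ yields value at least $(\ell-2)n$ by coloring all but one of its vertices red, and a coloring of value $k$ yields a solution of size at least $\lfloor k/n\rfloor/2+1$ --- give an approximation-preserving reduction losing only a factor $2$, so a polynomial-time $O(n^{1-\varepsilon})$-approximation for MAX-MST-ratio would $O(n^{1-\varepsilon})$-approximate MAX-Clique, which is impossible unless P=NP. As written, your proposal correctly locates the source problem and the weight gadget, but the one step on which soundness rests is absent (and the surrogate statement you propose to prove in its place would fail), so this is a genuine gap rather than a complete alternative proof.
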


\begin{proof}
    We prove that given an $\alpha$-approximation for MAX-MST-ratio, one can obtain a $2\alpha$-approximation for the MAX-Clique problem.
    Let $G=(V,E)$ be a graph on $n$ vertices on which we wish to find the maximum clique. From this, we create an instance of MAX-MST-ratio with the input graph $G'$, where $G'$ is a complete graph with $V(G')=V$ and the weight of any edge $e \in E(G')$ is $n$ if $e\in E$, and it is $1$ otherwise. Given a bi-partition of vertices into sets $R$ and $B$, we refer to $|\text{MST}(R)|+|\text{MST}(B)|$ as the \textit{value} of this bi-partition.%Given a coloring of $G$ into $R$ and $B$, we refer to $|MST(R)|+|MST(B)|$ as the value of this coloring.

    Given a clique $C$ of size $\ell$ in $G$, we can find a coloring of value $k$
     for $G'$ in polynomial time, where $\ell\le \lfloor \frac{k}{n}\rfloor+2$. This is trivial if $\ell\le 2$, and for $\ell\ge 3$ it is sufficient to color all but one of the vertices in $C$ by red, and the remaining vertices by blue. In this case, the weight of any MST of the red vertices is $(\ell -2) \cdot n$ %and for the blue vertices the MST has exactly $(n-\ell)$ many edges and hence the claim follows as $k\ge (\ell -2)n+(n-\ell)$ 
     and thus $\ell\le \lfloor \frac{k}{n}\rfloor+2$.
     
    Conversely, given a coloring $Q$ of value $k$ for $G'$, we show that we can find a clique of size at least $\frac{\lfloor k/n\rfloor}{2}+1$ in $G$ in polynomial time. Let $R_Q$ and $B_Q$ denote a red MST and a blue MST of $G$, respectively, in the coloring $Q$. Given that the weights in $G'$ belong to $\{1,n\}$ and since $R_Q \cup B_Q$ has exactly $n-2$ edges, then $R_Q \cup B_Q$ has exactly $\lfloor k/n\rfloor$ edges of weight $n$.
    W.l.o.g, we can assume that $R_Q$ has at least $ \frac{\lfloor k/n\rfloor}{2}$ edges of weight $n$. Consider $F:=(V(R_Q),E'')$, where $E''$ is the set of edges of weight $1$ in $R_Q$. Let $C_1,\ldots,C_r$ be the connected components of $F$. Clearly, we have $r\ge \frac{\lfloor k/n\rfloor}{2}+1$. For every $1\le i\le r$, let $u_i$ be a vertex in $C_i$. As $R_Q$ is an MST on the set of red vertices, then for every $i<j$, the weight of the edge $u_iu_j$ in $G$ is $n$. Hence, $\{u_1,u_2,\ldots,u_r\}$ is a clique of size $r\ge \frac{\lfloor k/n\rfloor}{2}+1$~in~$G$.
    
    Altogether, this shows that one can use an $\alpha$-approximation algorithm for the MAX-MST-ratio problem to find a $2\alpha$-approximation for the MAX-Clique problem. Zuckerman~\cite{Z06} proved that there exists no  $O(n^{1-\varepsilon})$-approximation for MAX-Clique problem, unless $P=NP$. Therefore, there exists no $O(n^{1-\varepsilon})$-approximation for MAX-MST-ratio, unless $P=NP$.
    \qedhere
\end{proof}

\section{Maximum MST-ratio of Geometric Point Sets}\label{sec:geometry}
We show that MAX-EMST-ratio problem is NP-hard. We remark that, a weighted complete graph does not necessarily correspond to the distance graph of a point set in $\mathbb{R}^d$. A weighted complete graph $G$ on $n$ vertices is \textit{realizable} in $\mathbb{R}^d$ if there exist $n$ points in $\mathbb{R}^d$ such that the distance graph of these points is isomorphic to $G$.
%\textcolor{red}{To show that MAX-EMST-ratio is NP-hard we need the following useful Lemma, whose proof is deferred to Appendix~??.}

We rely on a result by Dekster and Wilker \cite{DW87}, which shows that any weighted complete graph, whose weights belong to a small range, is realizable in Euclidean space.

% \begin{lemma}~\label{lem:Realizeable}
%     There exists a positive integer $k\in O(n)$ such that for every $\ell \ge k$, any weighted complete graph $G$ on $n$ vertices whose weights belong to $\{\ell,\ell+1\}$ is realizable in $\Rspace^{n-1}$.
% \end{lemma}

% Now we have all the ingredients to prove our main theorem.
% \begin{theorem}\label{thm:emstnphard}
%     The MAX-EMST-ratio problem is NP-hard.
% \end{theorem}
% \begin{proof}
%     By Lemma~\ref{lem:Realizeable}, every instance of MAX-ratio with weights $1$ and $2$ can be turned into a geometric graph by increasing the weights of all edges by a fixed amount $k\in O(n)$. Note that this action does not change the optimal coloring of the graph in terms of the maximum MST-ratio. This is because the total number of edges in the MSTs of the two colors is always two fewer than the total number of vertices. Thus, by  Theorem~\ref{thm:graphnphard} the MAX-EMST-ratio problem is NP-hard.
    
% \end{proof}

%%%%%%%%%%%%%%%%

%which shows that roughly speaking, any weighted complete graph with positive weights that do not differ too much, is realizable in Euclidean space.

\begin{lemma}\cite[Theorem 2]{DW87}\label{lem:nearregular}
    For every positive integer $n>0$ there exists a real number $ \lambda_n = \sqrt{1-\frac{1}{\Theta(n)}}$ such that every complete graph $G$ on $n$ vertices whose all edges have weights in $[\lambda_n \cdot \ell,\ell]$, for a positive real number $\ell$ is realizable in $\Rspace^{n-1}$. 
\end{lemma}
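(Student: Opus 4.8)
The plan is to recast realizability as a positive semidefiniteness statement for a Gram matrix, and then to view the near-regular case as a perturbation of the regular simplex. First I would fix one vertex, say $v_n$, and for a hypothetical realization $p_1,\dots,p_n$ with $p_n$ at the origin, form the $(n-1)\times(n-1)$ matrix $G$ with entries $G_{ij}=\tfrac12(d_{in}^2+d_{jn}^2-d_{ij}^2)$, where $d_{ij}$ denotes the prescribed weight of edge $v_iv_j$. By the classical Schoenberg criterion, points realizing the $d_{ij}$ in $\Rspace^{n-1}$ exist precisely when $G$ is positive semidefinite; indeed $G_{ii}=d_{in}^2$ and $G_{ii}+G_{jj}-2G_{ij}=d_{ij}^2$, so a factorization $G=UU^{\top}$ yields the desired points, and if $G$ is positive definite the realization spans $\Rspace^{n-1}$. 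Hence it suffices to show that every admissible weight assignment produces a positive definite $G$.

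Next I would single out the regular simplex as the reference configuration. Since all weights lie in $[\lambda_n\ell,\ell]$, all squared weights lie in $[a,b]$ with $a=\lambda_n^2\ell^2$ and $b=\ell^2$. Writing $m=\tfrac{a+b}{2}$, let $G_0$ be the matrix obtained by replacing every squared distance with $m$; this is exactly the Gram matrix of a regular simplex of squared side length $m$, and a short computation gives $G_0=\tfrac{m}{2}(I+J)$, with $I$ the identity and $J$ the all-ones matrix. The eigenvalues of $I+J$ are $n$ (once) and $1$ (with multiplicity $n-2$), so $G_0$ is positive definite with smallest eigenvalue $\tfrac{m}{2}=\Theta(\ell^2)$.

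Now set $E=G-G_0$ and estimate its spectral norm. Writing each squared weight as $m$ plus a deviation of absolute value at most $\tfrac{b-a}{2}$, the diagonal entries of $E$ are bounded by $\tfrac{b-a}{2}$, and each off-diagonal entry, being $\tfrac12$ times a signed sum of three such deviations, is bounded by $\tfrac{3(b-a)}{4}$. By Gershgorin's theorem (equivalently, the bound $\|E\|\le\max_i\sum_j|E_{ij}|$ for symmetric $E$) we get $\|E\|\le \tfrac{b-a}{2}+(n-2)\tfrac{3(b-a)}{4}=O\big(n(b-a)\big)=O\big(n\ell^2(1-\lambda_n^2)\big)$. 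By Weyl's inequality, $G=G_0+E$ is positive definite as soon as $\|E\|<\tfrac{m}{2}$; since $m\to\ell^2$ as $\lambda_n\to1$, this reduces to the requirement $1-\lambda_n^2<\tfrac{c}{n}$ for a suitable absolute constant $c$. Choosing $\lambda_n=\sqrt{1-\tfrac{c}{n}}=\sqrt{1-\tfrac{1}{\Theta(n)}}$ then guarantees $G\succ0$, hence realizability in $\Rspace^{n-1}$.

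The main obstacle is the scale mismatch in the perturbation estimate: the smallest eigenvalue of the reference matrix $G_0$ is of constant order $\Theta(\ell^2)$, while the row-sum bound on $\|E\|$ accumulates $\Theta(n)$ off-diagonal entries each of size $\Theta(\ell^2(1-\lambda_n^2))$. This is exactly what forces the admissible window to shrink like $1-\lambda_n^2=\Theta(1/n)$, matching the stated form of $\lambda_n$; a bound that ignored the number of off-diagonal terms would wrongly suggest a constant-width window. I would also note that the three deviations entering each off-diagonal entry can in the worst case reinforce rather than cancel, so the factor $3$ is genuine and affects only the constant $c$, not the $\Theta(1/n)$ rate. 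A sharper, non-Gershgorin spectral estimate for $E$ could improve $c$, but is unnecessary since the lemma asks only for the order $\Theta(n)$.
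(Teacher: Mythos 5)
Your argument is correct, and it is worth noting at the outset that the paper contains no proof of this statement at all: Lemma~\ref{lem:nearregular} is imported verbatim from Dekster and Wilker \cite{DW87}, so any self-contained derivation is necessarily a different route from what the paper offers. Your route --- reduce realizability to positive semidefiniteness of the one-point-anchored Gram matrix $G_{ij}=\tfrac12(d_{in}^2+d_{jn}^2-d_{ij}^2)$ via the Schoenberg criterion, identify the all-equal case with the regular-simplex Gram matrix $G_0=\tfrac{m}{2}(I+J)$ whose smallest eigenvalue is $\tfrac{m}{2}$, and then absorb the perturbation by Gershgorin plus Weyl --- checks out line by line: the eigenvalues of $I+J$ on $(n-1)\times(n-1)$ matrices are indeed $n$ (once) and $1$ (multiplicity $n-2$), the off-diagonal deviation bound $\tfrac{3(b-a)}{4}$ is right, and the resulting condition $1-\lambda_n^2<\tfrac{c}{n}$ yields exactly the form $\lambda_n=\sqrt{1-\tfrac{1}{\Theta(n)}}$ claimed. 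By contrast, Dekster and Wilker's original proof is a direct geometric analysis that produces explicit, essentially sharp constants for the admissible range, where your soft spectral argument buys brevity and transparency at the cost of a worse constant $c$ --- which is immaterial here, since the lemma (and its only use, in Lemma~\ref{lem:increase}) needs only the $\Theta(1/n)$ order. Your closing remark that the $1/n$ shrinkage is genuine and not an artifact of Gershgorin is also sound: two clusters of $k$ points each, with intra-cluster distance $u$ and inter-cluster distance $v$, are realizable only when $v^2\ge\bigl(1-\tfrac1k\bigr)u^2$, so the true threshold does satisfy $1-\lambda_n^2=\Theta(1/n)$; you assert this without proof, but since the statement only requires sufficiency of some $\lambda_n$ of the stated form, nothing in your proof depends on it.
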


We use the above lemma to increase all the weights of an instance in Theorem~\ref{thm:graphsuperhard} by a fixed number in order to make it realizable in Euclidean space.

\begin{lemma}\label{lem:increase}
    Let $G$ be a weighted complete graph $G$ on $n$ vertices. Let $w_1$ and $w_2$ be the minimum and maximum weights of the edges of $G$ and let $\lambda_n$ be as defined in Lemma~\ref{lem:nearregular}.
For every $N$ such that $N\ge \max\{|w_1|,\frac{\lambda_nw_2-w_1}{1-\lambda_n}\}$, by increasing the weight of all edges in $G$ by $N$, we obtain a realizable graph in $\Rspace^{n-1}$.
\end{lemma}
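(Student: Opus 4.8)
The goal is to show that after shifting every edge weight of $G$ up by $N$, the resulting weights all land in an interval $[\lambda_n \cdot \ell, \ell]$ to which Lemma~\ref{lem:nearregular} applies, forcing realizability in $\Rspace^{n-1}$. So the plan is to translate the realizability hypothesis into a concrete inequality on $N$ and then verify that the stated lower bound on $N$ is exactly what makes that inequality hold.

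First I would set up notation for the shifted graph. After increasing every weight by $N$, the minimum weight becomes $w_1 + N$ and the maximum becomes $w_2 + N$. To invoke Lemma~\ref{lem:nearregular} I need a positive real $\ell$ such that all shifted weights lie in $[\lambda_n \ell, \ell]$; the natural choice is $\ell = w_2 + N$, the new maximum. With this choice the upper endpoint is satisfied automatically. The condition that the smallest shifted weight clears the lower endpoint then reads $w_1 + N \ge \lambda_n (w_2 + N)$.

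Second, I would solve this inequality for $N$. Rearranging $w_1 + N \ge \lambda_n w_2 + \lambda_n N$ gives $N(1 - \lambda_n) \ge \lambda_n w_2 - w_1$, and since $\lambda_n < 1$ the factor $1 - \lambda_n$ is positive, so dividing yields $N \ge \frac{\lambda_n w_2 - w_1}{1 - \lambda_n}$. This is precisely the second term in the maximum defining the hypothesis on $N$, so the lower-endpoint condition holds by assumption. The first term, $N \ge |w_1|$, is what I would use to guarantee $\ell = w_2 + N > 0$ (indeed all shifted weights are nonnegative, hence positive since they exceed $\lambda_n \ell \ge 0$), so that Lemma~\ref{lem:nearregular} applies with a genuinely positive $\ell$. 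Once both endpoints are verified, Lemma~\ref{lem:nearregular} immediately certifies that the shifted complete graph is realizable in $\Rspace^{n-1}$, completing the argument.

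I do not expect a serious obstacle here: the lemma is essentially an algebraic bookkeeping step whose whole purpose is to package the Dekster--Wilker interval condition in a form usable for the NP-hardness reduction of Theorem~\ref{thm:emstnphard}. The only point requiring a little care is making sure the chosen $\ell$ is strictly positive and that weights remain legitimate (positive) distances after the shift, which is exactly why the bound $N \ge |w_1|$ is included alongside the ratio bound; handling the degenerate case $w_1 = w_2$ (where $\lambda_n w_2 - w_1$ may be negative and the ratio bound becomes vacuous) is likewise immediate once the positivity of $\ell$ is secured.
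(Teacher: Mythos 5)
Your proposal is correct and matches the paper's proof essentially verbatim: both set $\ell = N + w_2$, rearrange $N \ge \frac{\lambda_n w_2 - w_1}{1-\lambda_n}$ into $N + w_1 \ge \lambda_n(N + w_2)$, and invoke Lemma~\ref{lem:nearregular}, with the bound $N \ge |w_1|$ securing positivity of the shifted weights. Your extra remarks on the positivity of $\ell$ and the degenerate case $w_1 = w_2$ are fine points the paper leaves implicit, but they do not change the argument.
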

\begin{proof}
     Increasing the weights of the edges in $G$ by $N$ results in a graph $G'$ with positive weights in which the minimum and maximum weights are $N+w_1$ and $N+w_2$. As $N> \frac{\lambda_nw_2-w_1}{1-\lambda_n}$ then $N+w_1 \geq \lambda_n(N+w_2)$ and hence by Lemma~\ref{lem:nearregular}, $G'$ is realizable in $\Rspace^{n-1}$.
     \qedhere
\end{proof}
\begin{theorem}\label{thm:emstnphard}
    The MAX-EMST-ratio problem is NP-hard.
\end{theorem}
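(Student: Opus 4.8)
The plan is to reduce the abstract MAX-MST-ratio problem, shown NP-hard in Theorem~\ref{thm:graphsuperhard}, to the geometric MAX-EMST-ratio problem, using the realizability machinery of Lemmas~\ref{lem:nearregular} and~\ref{lem:increase}. I would start from a hard instance $G'$ of MAX-MST-ratio with integer edge weights in $\{1,n\}$ (exactly the instances produced by the reduction of Theorem~\ref{thm:graphsuperhard}), apply Lemma~\ref{lem:increase} to add a fixed value $N$ to every edge weight so that the resulting complete graph $G'_N$ is realizable in $\Rspace^{n-1}$, and then let $P$ be a point set whose Euclidean distance graph is $G'_N$. Since the dimension $n-1$ grows with the input, this lands in the regime where the dimension is part of the input. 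The goal is to show that an optimal coloring for the EMST-ratio of $P$ is also an optimal coloring for the MST-ratio of $G'$, so that any exact algorithm for MAX-EMST-ratio solves MAX-MST-ratio.

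The crux is a cancellation that makes the additive shift harmless for the \emph{argmax} over colorings. First, $|\text{MST}(G')|$, the denominator of the ratio, is the MST of the whole vertex set and hence does not depend on the coloring; so maximizing $\mu(G',R)$ amounts to maximizing the numerator $V(R):=|\text{MST}(R)|+|\text{MST}(B)|$, where $B$ is the complement of $R$. Second, for every proper bi-partition the two color trees use exactly $(|R|-1)+(|B|-1)=n-2$ edges, a count independent of the partition, whereas the full MST always uses $n-1$ edges. Because adding a constant $N$ to all edge weights does not change which spanning tree of a fixed $k$-subset is minimum (all such trees have $k-1$ edges), the Euclidean MSTs of $P$ and of its color classes realize precisely the graph MSTs of $G'$, and the EMST-ratio of $P$ under a coloring $R$ equals
\begin{equation*}
\mu_N(R)=\frac{V(R)+(n-2)N}{c+(n-1)N},
\end{equation*}
where $c=|\text{MST}(G')|$ and both $(n-2)N$ and the entire denominator are independent of $R$. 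Thus $\mu_N(R)$ is a strictly increasing function of $V(R)$ and is maximized by exactly the colorings that maximize $V(R)$, i.e.\ the optimal colorings for $G'$. The same identity turns a decision threshold $t$ for $G'$ into the threshold $t'=\frac{tc+(n-2)N}{c+(n-1)N}$ for $P$, giving a clean Karp reduction.

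It remains to check that the reduction is polynomial. With $w_1=1$, $w_2=n$, and $\lambda_n=\sqrt{1-1/\Theta(n)}$, Lemma~\ref{lem:increase} permits $N=\Theta(n^2)$, which is polynomially bounded, so all shifted weights have polynomial bit-length. The main obstacle I anticipate is producing $P$ \emph{explicitly}: Lemma~\ref{lem:nearregular} asserts realizability but I must exhibit coordinates that are computable in polynomial time and have polynomially many bits. I would handle this by computing rational approximations to the Dekster--Wilker realization: since the two weight classes of $G'_N$ are $N+1$ and $N+n$, separated by a gap of $n-1$, perturbing each coordinate by at most $1/\mathrm{poly}(n)$ keeps every light distance strictly below every heavy distance, so the combinatorial MST structure of every vertex subset is preserved. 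As $V(R)$ takes distinct integer values across colorings, the induced error in each $\mu_N(R)$ is far smaller than the separation between colorings, so the argmax is robust to the approximation. Combining the cancellation identity with this constructive realizability argument shows MAX-EMST-ratio is at least as hard as MAX-MST-ratio, and hence NP-hard.
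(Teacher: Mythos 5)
Your proposal is correct and follows essentially the same route as the paper: reduce from the weight-$\{1,n\}$ instances of Theorem~\ref{thm:graphsuperhard}, shift all weights by $N=\Theta(n^2)$ via Lemma~\ref{lem:increase} to obtain a realizable graph, and observe that the shift preserves optimal colorings because every bi-partition's two trees use exactly $n-2$ edges. Your explicit cancellation identity $\mu_N(R)=\bigl(V(R)+(n-2)N\bigr)/\bigl(c+(n-1)N\bigr)$ and your perturbation argument for computing rational coordinates (exploiting that distinct values of $V(R)$ are integers, hence ratio values are separated by at least $1/\mathrm{poly}(n)$) are welcome additions that make rigorous a step the paper's proof leaves implicit, namely that the Dekster--Wilker realization can be output with polynomially many bits.
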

\begin{proof}
    By Lemma~\ref{lem:increase}, the instances of the MAX-MST-ratio problem with weights $1$ and $n$ can be turned into a geometric graph by increasing the weights of all edges by a fixed amount. Note that this action does not change the optimal coloring of the graph in terms of the maximum MST-ratio. This is because the total number of edges in the MSTs of the two colors is always $n-2$. Moreover, the amount we add is $O(n^2)$, according to Lemma~\ref{lem:increase}. Thus, from any instance of the MAX-MST-ratio with weights $1$ and $n$, we can construct an instance of the MAX-EMST-ratio problem in polynomial time and hence by Theorem~\ref{thm:graphsuperhard}.
    \qedhere
\end{proof}

As established in Theorem~\ref{thm:emstnphard}, determining the maximum EMST-ratio is NP-hard. We conjecture that the problem remains NP-hard even when restricted to a specific dimension. In particular, we are interested in $\Rspace^2$ and $\Rspace^3$, as they are relevant for practical applications.
\begin{conjecture}
    % The problems $2$-MAX-EMST-ratio and $3$-MAX-EMST-ratio are NP-hard.
    The $k$-MAX-EMST-ratio problem for $k \in \{2,3\}$ is NP-hard.
\end{conjecture}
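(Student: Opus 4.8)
The final statement is a conjecture, so there is no complete proof to reconstruct; instead I will lay out the strategy I would pursue to settle it, focusing on the dimension-restricted hardness of $k$-MAX-EMST-ratio for $k \in \{2,3\}$.

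\medskip

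The plan is to mimic the two-step reduction that established Theorem~\ref{thm:emstnphard}, but with the crucial difference that I can no longer invoke Lemma~\ref{lem:nearregular}: the Dekster--Wilker realizability result requires ambient dimension $n-1$, which is forbidden once $k$ is fixed. So the geometry must be built by hand. First I would look for an NP-hard combinatorial problem with an inherently low-dimensional flavor---planar \textsc{3-SAT}, planar vertex cover, or a grid-embeddable constraint problem---rather than reducing from the dimension-agnostic \textsc{Max-Clique} used in Theorem~\ref{thm:graphsuperhard}. The goal is a gadget construction that places actual points in $\Rspace^2$ (or $\Rspace^3$) so that the optimal red/blue coloring of a cleverly arranged point set encodes a satisfying assignment or a maximum independent set, exploiting the identity that the two monochromatic EMSTs together always contain exactly $n-2$ edges.

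\medskip

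The key structural lever, which I would try to preserve from the abstract setting, is that maximizing $\mu(P,R)$ amounts to maximizing the total monochromatic length $|\text{EMST}(R)| + |\text{EMST}(B)|$ against a fixed denominator $|\text{EMST}(P)|$; since each monochromatic forest has a bounded number of edges, long monochromatic edges are the scarce, valuable resource. The heart of the reduction would therefore be a \emph{cluster gadget}: a group of points placed so that they are mutually close (cheap to connect within a color) but separated from other clusters by large, carefully tuned distances, so that forcing a long edge into a monochromatic tree is possible only when the coloring respects a desired local constraint. I would engineer distances so that an optimal coloring is driven to route costly inter-cluster edges precisely in the configurations corresponding to ``yes'' instances, and to establish a gap between the optimum on satisfiable versus unsatisfiable inputs. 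Scaling and perturbation arguments (analogous to how Lemma~\ref{lem:increase} shifts all weights by a common amount without disturbing the optimal coloring) would be used to fit every pairwise distance into a feasible Euclidean realization in the plane or in $\Rspace^3$.

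\medskip

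I expect the realizability constraint to be the main obstacle. In the abstract graph reduction one may assign edge weights freely, but planar point sets must satisfy the triangle inequality and, more restrictively, all the metric conditions of an actual $2$- or $3$-dimensional embedding; a gadget that is easy to specify combinatorially may simply fail to exist geometrically. The delicate part is thus to design gadgets whose required distance pattern is genuinely embeddable in low dimension \emph{and} whose cross-gadget interference is controllable---so that the EMST of the union, and each monochromatic EMST, behave as intended and do not create unintended shortcuts between distant gadgets. Controlling these global EMST effects while keeping the local encoding intact, and proving a clean completeness/soundness gap on top of it, is where I anticipate the bulk of the difficulty to lie.
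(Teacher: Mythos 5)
The statement you were given is posed in the paper as an open conjecture: the authors prove NP-hardness of MAX-EMST-ratio only when the dimension is part of the input (Theorem~\ref{thm:emstnphard}, via Lemma~\ref{lem:nearregular} with ambient dimension $n-1$), and they explicitly leave the fixed-dimension cases $k\in\{2,3\}$ unresolved. So there is no proof in the paper to compare yours against, and you were right not to manufacture one: your submission is a research plan, not a proof, and it should be judged as such.

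As a plan, it is sound and correctly diagnoses the obstacles. You identify precisely why the paper's own route fails in fixed dimension (Dekster--Wilker needs $n-1$ dimensions, and more fundamentally, arbitrary weight matrices are not realizable as planar or spatial point sets), you propose the standard remedy (reduction from a planar/grid-structured NP-hard problem with hand-built gadgets), and you isolate the two hardest technical points: making the gadget distance patterns genuinely embeddable in $\Rspace^2$ or $\Rspace^3$, and controlling global EMST behavior so that distant gadgets do not create unintended monochromatic shortcuts --- the latter being especially delicate because the EMST is a global object, unlike the local constraints in typical planar-SAT reductions. The one place your sketch is thinner than it should be is the completeness/soundness gap: since the MST-ratio is a quotient with a coloring-dependent numerator whose two trees together always have $n-2$ edges, you would need the gap between "yes" and "no" instances to survive the perturbations required for embeddability, and nothing in your outline yet quantifies that. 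But as a statement of strategy for an open problem, your proposal is appropriate and does not misrepresent anything as proven.
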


Theorem~\ref{thm:graphsuperhard} shows that it is even hard to approximate the maximum MST-ratio for general graphs. However, in Euclidean space, it is not difficult to approximate the maximum EMST-ratio as it cannot be too large. We show a general upper bound for the maximum MST-ratio in any metric space that leads to a simple $3$-approximation algorithm for the MAX-EMST-ratio problem. In the proof, we use the following well-known fact about the path double cover of trees.

\begin{lemma}[\cite{PDC14}]\label{lem:PartitionToPaths}
    Given a tree $T$ with $\ell$ leaves, there exist $\ell$ (not necessarily distinct) paths in $T$ such that each edge of $T$ is in precisely two of these paths.
\end{lemma}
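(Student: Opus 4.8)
The plan is to give an explicit construction driven by a depth-first traversal of $T$. First I would root $T$ at an arbitrary internal vertex $r$; the degenerate case where $T$ is a single edge is handled separately, since there the two leaves already supply the two required paths, each equal to that edge. Running a DFS from $r$, I record the leaves in the order in which they are first encountered, obtaining a cyclic list $v_1, v_2, \ldots, v_\ell$. The $\ell$ candidate paths are then $P_k$, the unique path in $T$ between $v_k$ and $v_{k+1}$, with indices taken modulo $\ell$ (so that $P_\ell$ joins $v_\ell$ and $v_1$). Because any two vertices of a tree are joined by a unique path, each $P_k$ is well defined, and there are exactly $\ell$ of them, matching the count in the statement.

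Next I would verify the double-cover property, for which there are two essentially equivalent routes. The conceptual one is to view the DFS as generating an Euler tour of $T$: a closed walk from $r$ that traverses every edge exactly twice, once descending and once ascending. The key observation is that the portion of this closed tour lying strictly between two consecutive leaf-visits $v_k$ and $v_{k+1}$ ascends from $v_k$ to $\mathrm{lca}(v_k, v_{k+1})$ and then descends to $v_{k+1}$ without repeating an edge; hence it coincides exactly with the simple path $P_k$. The leaf-visits therefore cut the closed tour into the $\ell$ paths $P_1, \ldots, P_\ell$, and since the tour covers each edge exactly twice while the $P_k$ partition its edge-traversals, every edge of $T$ lies in precisely two of the $P_k$.

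Alternatively I would argue through a cut condition. Deleting an edge $e$ splits $T$ into two components, and a path $P_k$ uses $e$ if and only if its endpoints $v_k, v_{k+1}$ lie in different components. The crucial structural fact about DFS is that the leaves in either component occupy a contiguous block $v_i, v_{i+1}, \ldots, v_j$ of the leaf order; moreover each component of $T - e$ contains at least one leaf, so this block is proper and nonempty. A short case check on the two boundaries of the block (reading the order cyclically) then shows that exactly two consecutive pairs $(v_k, v_{k+1})$ straddle the cut, so $e$ belongs to precisely two paths.

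The main obstacle --- the one point to handle with care rather than wave through --- is justifying that the inter-leaf segment of the Euler tour is a genuine non-backtracking path equal to $P_k$, equivalently that the leaf set of every subtree is contiguous in the DFS order. This is exactly where the \emph{depth-first} property does the real work: DFS fully explores one subtree before moving to the next, so between consecutive leaves it turns around precisely at their least common ancestor and never ascends past it. Once this contiguity/turn-around property is established, both the edge count (exactly $\ell$ paths) and the exact double cover follow immediately, and the ``not necessarily distinct'' caveat is explained by degenerate trees such as a path, where $P_\ell$ and $P_1$ may denote the same path.
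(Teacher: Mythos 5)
Your proof is correct, but note that the paper does not prove this lemma at all: it is stated as a known fact and attributed to the reference \cite{PDC14}, so there is no in-paper argument to compare against. Your Euler-tour construction is the standard and cleanest self-contained proof, and both of your routes go through. The one place that genuinely needs care is the wrap-around pair $(v_\ell, v_1)$: the tour segment between them passes through the root, and it equals the simple path $P_\ell$ only because $v_1$ lies in the first-explored subtree of $r$ and $v_\ell$ in the last-explored one, which are distinct subtrees precisely because you rooted at an \emph{internal} vertex (degree at least $2$); your rooting convention, together with the separate treatment of the single-edge tree, closes this. Your second, cut-based argument is also sound: the leaves below a deleted edge $e$ occupy one contiguous DFS interval, its cyclic complement is contiguous as well, each side of the cut contains a leaf of $T$ (a leaf of a component other than the endpoint of $e$ keeps its degree from $T$), and a proper nonempty cyclic block has exactly two boundary pairs, giving exactly two paths through $e$. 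Your proof even yields slightly more than the statement requires, since your $\ell$ paths all have leaf endpoints, whereas the lemma as stated does not demand this; in the paper's application (lower-bounding $W_{P^*}$ in Theorem~\ref{thm:metricupperbound}) only the double-cover count is used, so your argument would serve as a drop-in replacement for the citation.
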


% \begin{proof}
%     We root $T$ at a non-leaf vertex $r$ and run DFS from $r$. Let $u_1,u_2,\dots,u_\ell$ be the sequence of leaves order based on the time they are visited by the DFS, i.e. their starting time (see Figure~\ref{fig:DFS}). Now consider the set of paths $\mathcal{P}=\{P_{1,2},P_{2,3},\dots,P_{\ell,1}\}$, where $P_{i,j}$ is the path from $u_i$ to $u_j$ in $T$. We argue that the paths in $S$ satisfy the conditions of the lemma. Consider any edge $e=vw$ of $T$, where $v$ is the parent of $w$ in the DFS. If $w=u_i$ for some $i$, then it is clear that $e$ appears twice in these paths. Now assume $w$ is not a leaf. Let $u_i,u_{i+1},\dots,u_j$ be the set of leaves that are descendants of $w$. Then if $i>1$, then $e$ belongs to $P_{i,i+1}$, and otherwise it belongs to $P_{\ell,1}$. Similarly, if $j<\ell$ then $e$ belongs to $P_{j,j+1}$, and otherwise it belongs to $P_{\ell,1}$. 
% \end{proof}

\begin{theorem}\label{thm:metricupperbound}
    In any metric space, the maximum MST-ratio for any set of $n \ge 5$ points is at most $3-\frac{4}{n-1}$.
\end{theorem}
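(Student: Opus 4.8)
The plan is to reduce everything to the unique tree metric of an optimal spanning tree and then apply the path double cover (Lemma~\ref{lem:PartitionToPaths}) color by color. Fix a proper coloring $P = R \cup B$ and let $T = \text{MST}(P)$, with $|T|$ its total length. By the triangle inequality, for any two points the metric distance is at most the length of the path joining them in $T$; hence it suffices to bound the cost of connecting $R$ (and $B$) using only shortcuts along $T$, i.e. to work with the shortest‑path (tree) metric of $T$. For a color $c \in \{R,B\}$, let $T_c$ denote the minimal subtree of $T$ spanning the points of color $c$; every leaf of $T_c$ has color $c$. I would then prove the theorem by upper bounding $|\text{MST}(R)| + |\text{MST}(B)| \le |\text{MST}_T(R)| + |\text{MST}_T(B)|$, where $\text{MST}_T(\cdot)$ uses tree distances.

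Next I would handle a single color. Apply Lemma~\ref{lem:PartitionToPaths} to $T_c$: if $T_c$ has $\ell_c$ leaves, it is covered by $\ell_c$ paths, each edge of $T_c$ lying in exactly two of them, for a total length $2|T_c|$. Each such path is a leaf‑to‑leaf path, so its endpoints have color $c$; inserting, along every path, the direct shortcut edges between consecutive color‑$c$ vertices (each bounded by the corresponding subpath via the triangle inequality) yields a connected spanning subgraph of the color‑$c$ points of total length at most $2|T_c|$. Discarding the longest path of the cover turns this into a spanning tree, giving $|\text{MST}_T(R)| \le 2|T_R| - \delta_R$ and $|\text{MST}_T(B)| \le 2|T_B| - \delta_B$, where $\delta_c$ is at least the length of the longest leaf‑to‑leaf path $\mathrm{diam}(T_c)$.

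Summing, $|\text{MST}(R)| + |\text{MST}(B)| \le 2(|T_R| + |T_B|) - \delta_R - \delta_B$. The joint step classifies each edge $e$ of $T$ by the two components of $T - e$: $e \in T_R$ iff both sides contain a red point, and $e \in T_B$ iff both sides contain a blue point, so $|T_R| + |T_B| = |T| + |O|$, where $|O|$ is the total length of the \emph{fully mixed} edges (both colors present on both sides). The target thus reduces to the key inequality $2|O| - \delta_R - \delta_B \le |T| - \tfrac{4}{n-1}|T|$. I would establish it by showing that every fully mixed edge lies on long leaf‑to‑leaf paths in \emph{both} $T_R$ and $T_B$, so that $\mathrm{diam}(T_R) + \mathrm{diam}(T_B)$ grows with $|O|$ and absorbs the doubled mixed mass; the residual $\tfrac{4}{n-1}|T|$ then comes from the two edges discarded when converting each double‑covered subgraph into a spanning tree, each bounded below by the mean edge length $\tfrac{|T|}{n-1}$ (this is where the hypothesis $n \ge 5$ is used to keep the bound meaningful).

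The hard part will be precisely this joint accounting. Treating the two colors independently only gives the useless estimate $2|T_R| + 2|T_B| \le 4|T|$; the entire gain from $4$ down to $3$ must be extracted from the overlap of $T_R$ and $T_B$, i.e. from the fact that a fully mixed edge is "paid for" twice by the discarded diameters rather than doubled twice by the covers. Making this trade‑off quantitative — correctly lower bounding $\delta_R + \delta_B$ in terms of $|O|$ and then squeezing out the exact additive $\tfrac{4}{n-1}|T|$ instead of a plain strict inequality below $3$ — is the step I expect to require the most care, and it is the part where the combinatorics of how fully mixed edges sit inside the two minimal subtrees must be analyzed in detail.
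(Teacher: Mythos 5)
Your reduction to the tree metric of $T$, the identity $|T_R|+|T_B|\le |T|+|O|$, and the use of Lemma~\ref{lem:PartitionToPaths} are all reasonable openings, but the joint accounting you defer to the end --- the inequality $2|O|-\delta_R-\delta_B\le\bigl(1-\tfrac{4}{n-1}\bigr)|T|$ --- is not just hard, it is false, so the framework cannot be repaired by more careful bookkeeping. Concretely, take the metric space to be the shortest-path metric of a spider: a red center $s$, $m$ unit-length legs ending at red points $r_1,\dots,r_m$, and a blue point $b_i$ pendant at distance $\varepsilon$ from each $r_i$ (so $n=2m+1$). Here $T$ is the spider itself, $|T|=m(1+\varepsilon)$, every leg edge is fully mixed, so $|O|=m$, while $T_R$ is the spider of legs and $T_B$ is the whole tree, giving $\mathrm{diam}(T_R)+\mathrm{diam}(T_B)=4+2\varepsilon$. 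Your required inequality becomes (up to $\varepsilon$) $2m-4\le m-2$, which fails for every $m\ge 3$. The structural reason is that you hoped $\delta_R+\delta_B$ ``grows with $|O|$,'' but the diameters of $T_R$ and $T_B$ stay bounded while the fully mixed mass grows linearly; doubling both subtrees costs $2|T_R|+2|T_B|\approx 4m$, whereas the true optimum on this family is $m+(m-1)(2+2\varepsilon)\approx 3m-2$ --- exactly the theorem's bound $\bigl(3-\tfrac{4}{n-1}\bigr)|T|$, so any per-color double cover that discards only one path per color provably overshoots. A secondary gap: Lemma~\ref{lem:PartitionToPaths} only asserts existence of \emph{some} double cover, and the shortcut graph it induces need not be connected (for a four-leaf spider with color-$c$ leaves $a,b,c,d$, the valid cover consisting of the path $a$--$b$ twice and the path $c$--$d$ twice produces two shortcut components), so even the step ``yields a connected spanning subgraph'' needs an argument about choosing the cover, and discarding the longest path can disconnect what remains.

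The paper's proof avoids both problems by constructing the red and blue trees \emph{jointly} rather than double-covering each color separately: it roots $T$ at an endpoint of a heaviest path $P^*$ whose endpoints are non-leaves, runs a DFS prioritizing $P^*$, and wires each color through chains of ``nearest same-colored descendants'' in DFS order. The payoff is a congestion bound of at most $3$ witness paths per edge of $T$ \emph{summed over both colors} (instead of your $2+2=4$), with congestion at most $2$ on leaf edges and on $P^*$, yielding $|R_Q|+|B_Q|\le 3|T|-W_L-W_{P^*}$. The path double cover lemma is then used for an entirely different purpose: applied once to the leaf-pruned tree $T'$, whose leaf count is at most $\tfrac{n-1}{2}$, it certifies that a heaviest non-leaf-ended path satisfies $W_{P^*}\ge\tfrac{4(|T|-W_L)}{n-1}$, which extracts the additive $\tfrac{4}{n-1}|T|$. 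In short, the gain from $4$ down to $3$ comes from a global per-edge congestion argument on a single cleverly ordered construction, not from subtracting discarded paths from two independent double covers; your proposal is missing that joint construction, and the spider example shows it cannot be simulated within your setup.
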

\begin{proof}
    Consider an arbitrary point set $S$ in a metric space, and let $T$ be an MST of $S$. 
    Let $P^*$ be a path with maximum total weight among all paths in $T$ whose endpoints are not leaves of $T$. 
    Assume w.l.o.g. that one endpoint of $P^*$, denoted by $r$, is red, and we hang $T$ from $r$.
    We traverse the tree from the root $r$ using Depth-First Search (DFS), prioritizing the vertices of $P^*$ when exploring branches. The starting time of a node refers to the moment it is first visited during the traversal.
    
    To prove the statement, for every coloring $Q$ of $S$ we build spanning trees $R_Q$ and $B_Q$ on the set of red vertices and the set of blue vertices of $Q$ respectively, such that $|R_Q \cup B_Q|$ is at most $(3-\frac{4}{n-1})|T|$. We construct $R_Q$ and $B_Q$ in the following way (see Figure~\ref{fig:DFS}).
    \begin{itemize}
        \item We add every edge $e$ of $T$ with both endpoints in red (blue) to $R_Q$ ($B_Q$).
        \item Consider any red (blue) vertex $v$ in $T$ and any blue (red) child $w$ of $v$ such that there exists at least one red (blue) vertex in the descendants of $w$ (e.g. $v=r$ in Figure~\ref{fig:DFS}). Let $u_1, u_2, \ldots, u_x$ be the red descendants of $w$ for which all the internal vertices of the $wu_i$-path in $T$ are blue (red). Assume that these vertices are sorted by starting time (i.e. $u_1$ has the earliest starting~time, and $u_x$ has the latest).
        %W.l.o.g let us assume that the starting time of $u_i$ is smaller than $u_j$ if $i<j$.
        We add the edges $u_1u_2,u_2u_3,\ldots,u_{x-1}u_x,u_xv$ to $R_Q$ (to $B_Q$).
        \item Let $v_1$, $v_2$, $\ldots$, $v_y$ be the set of blue vertices that have no blue ancestors, sorted by starting time. We add the edges $v_1v_2,v_2v_3,\ldots,v_{y-1}v_y$ to $B_Q$.
    \end{itemize}

    % \vspace{-0.1cm}
    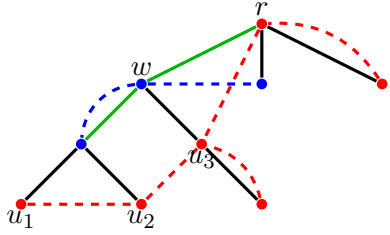
\begin{figure}[ht]
    \centering
        \begin{tikzpicture}[scale=0.6]

                % Node styles
                \tikzset{black dot/.style={draw=black, very thick, circle,minimum size=0pt, inner sep=1pt, outer sep=1pt,fill=black}}
                \tikzset{blue dot/.style={draw=blue, very thick, circle,minimum size=0pt, inner sep=1pt, outer sep=1pt,fill=blue}}
                \tikzset{red dot/.style={draw=red, very thick, circle,minimum size=0pt, inner sep=1pt, outer sep=1pt,fill=red}}
                \tikzset{terminal/.style={draw=black,  thick,minimum size=0pt, inner sep=2.5pt, outer sep=1pt}}
                \tikzset{P node/.style={fill={rgb,255: red,20; green,154; blue,0}, draw={rgb,255: red,20; green,154; blue,0}, circle, minimum size=0pt,inner sep=1pt, outer sep=1pt}}
            
                % Edge styles
                \tikzstyle{witness edge}=[-, draw={rgb,255: red,195; green,0; blue,3}, very thick]
                \tikzstyle{T edges}=[-, very thick]
                \tikzstyle{new witness}=[-, draw={rgb,255: red,195; green,0; blue,3}, dashed, very thick]
                \tikzstyle{connected terminals}=[-, draw=black, dashed, very thick]
                \tikzstyle{P}=[-, draw={rgb,255: red,20; green,154; blue,0}, very thick]

                % \node [style=black dot] (r) at (0,0){};
                % \node [style=black dot] (v1) at (-2,-1){};
                % \node [style=black dot] (v2) at (0,-1){};
                % \node [style=black dot] (v3) at (2,-1){};
                % \node [style=black dot] (u11) at (-3,-2){};        
                % \node [style=black dot] (u12) at (-1,-2){};
                % \node [style=black dot] (u31) at (2,-2){};
                % \draw [black, very thick](u31) to (v3) to (r) to (v2);
                
                % \draw [black, very thick](u11) to (v1) to (u12);

                % \draw [black, very thick](v1) to (r);

                % \node () at (0,0.25) {$r$};

                % \node () at (0,-1.25) {$u_3$};
                % \node () at (-1,-2.25) {$u_2$};
                % \node () at (-3,-2.25) {$u_1$};
                % \node () at (2,-2.25) {$u_4$};

                \node [style=red dot] (r) at (8,0){};
                \node [style=blue dot] (v1) at (6,-1){};
                \node [style=blue dot] (v2) at (8,-1){};
                \node [style=red dot] (v3) at (10,-1){};
                \node [style=blue dot] (u11) at (5,-2){}; 
                \node [style=red dot] (u111) at (4,-3){};
                \node [style=red dot] (u112) at (6,-3){};
                \node [style=red dot] (u12) at (7,-2){};
                \node [style=red dot] (u122) at (8,-3){};
                
              %  \node [style=black dot] (u31) at (10,-2){};
                \draw [black, very thick] (v3) to (r) to (v2);
                \draw [black, very thick] (u111) to (u11) to (u112);
                \draw [black, very thick] (v1) to (u12) to (u122);

                \draw [green!70!black, very thick] (r) to (v1) to (u11);
               % \draw [black, very thick](v1) to (r);

                \draw [red, very thick, dashed] (u111) to (u112);
                \draw [red, very thick, dashed] (r) to (u12) to (u112);
                \draw [red, very thick, dashed, bend right=30] (v3) to (r);
                \draw [red, very thick, dashed, bend right=30] (u122) to (u12);
                \draw [blue, very thick, dashed, bend right=40] (v1) to (u11);
                \draw [blue, very thick, dashed] (v1) to (v2);

                \node () at (8,0.25) {$r$};

                \node () at (7,-2.25) {$u_3$};
                \node () at (6,-3.25) {$u_2$};
                \node () at (4,-3.25) {$u_1$};
                \node () at (6,-0.75) {$w$};

\end{tikzpicture}
        \caption{The construction of $R_Q$ and $B_Q$ (dashed edges) according to the coloring $Q$ of vertices in the proof of Theorem~\ref{thm:metricupperbound}. The solid black and green edges form the tree $T$ rooted at $r$. The children of any vertex are from left to right according to the order of their starting time in DFS. %The vertices of the tree are colored in red or blue by a coloring $Q$. %Consider the red vertex $v=r$. It has a child $w$ that is blue and $w$ has at least one red descendant. According to the proof the theorem this would add the edges $u_1u_2$, $u_2u_3$ and $u_3r$ to $R_Q$.
        The green edges are the edges of $P^*$.
        }
    \label{fig:DFS}
\end{figure}
% \vspace{-0.1cm}
Let $\mathcal{P}$ be the set of all $uv$-paths $P_e$ in $T$, where $e=uv$ is an edge in $R_Q\cup B_Q$.
    Observe that every edge of $T$ belongs to at most $3$ paths in $\mathcal{P}$. Namely, by construction, if $x$ is the parent of $y$ and $x$ is blue (similarly red) then the edge $xy$ appears in at most one path $P_{e_b}$ where $e_b\in B_Q$ and at most two paths $P_{e_r}$ where $e_r\in R_Q$.
    Furthermore, certain edges in $T$ are covered by at most two paths in $\mathcal{P}$, namely the leaf edges (i.e. edges incident to a leaf) and the edges in $P^*$ (i.e. the green path in Figure~\ref{fig:DFS}).
    
    Since the instance is metric, the weight of every edge $uv$ in $R_Q \cup B_Q$ is not more than the total weight of the corresponding $uv$-path in $\mathcal{P}$. So we have $|R_Q|+|B_Q|\le 3|T|-W_L-W_{P^*}$, where $L$ is the set of leaf edges in $T$ and $W_L$ and $W_{P^*}$ are the total weight of the edges of $L$ and $P^*$, respectively. 
    %By the above arguments and since the instance is metric we have $|R_Q|+|B_Q|\le 3|MST(S)|-W_L-W_{P^*}$, where $L$ is the set of leaf edges in $T$ and $W_L$ and $W_{P^*}$ are the total weight of the edges of $L$ and $P^*$, respectively.
    It suffices to show that $W_{P^*}+W_L\ge (\frac{4}{n-1})|T|$.
            
    If the diameter of $T$ is less than $4$, then $P^*\cup L=T$, and hence $W_{P^*}+W_L=|T|\ge (\frac{4}{n-1})|T|$.
    Assume that the diameter of $T$ is at least $4$. Let $T'$ be the tree obtained from $T$ by removing all the leaves of $T$ and let $\ell$ and $\ell'$ be the number of leaves of $T$ and $T'$, respectively. Note that a vertex in $T$ that is a leaf in $T'$ must be adjacent to at least one leaf in $T$ and hence $\ell'\le \ell$. Furthermore, as the diameter of $T$ is at least $4$, then $T'$ has at least one non-leaf vertex which implies that $\ell+\ell'\le n-1$ and thus, $\ell'\le \frac{n-1}{2}$. 
    Using Lemma~\ref{lem:PartitionToPaths}, there are $\ell'$ paths in $T'$ such that all edges of $T'$ belong to exactly two such paths. Hence, $W_{P^*}$ is lower bounded by $\frac{2|T'|}{\ell'}=\frac{2(|T|-W_L)}{\ell'}\ge \frac{4(|T|-W_L)}{n-1}$.
    Therefore, it can be concluded that $W_{P^*}+W_L\ge \frac{4(|T|-W_L)}{n-1}+W_L\ge \frac{4}{n-1}|T|.$  
    \qedhere
\end{proof}

Note that the above bound is the best we can achieve in metric spaces. In fact,
%\begin{theorem}\label{thm:metricupperboundThightness}
    for every odd $n$, there exist metric spaces with $n$ points (elements) for which the maximum MST-ratio is arbitrarily close to %greater than 
    $3-\frac{4}{n-1}$. To see this,
%\end{theorem}
%\begin{proof}
    let $n=2k+1$, and define the point set $S=\{v_1,\ldots,v_{n}\}$. For every $i<n$, we set the weight of $v_{n}v_i$ to $1$. For every $1 \le i\le k$, we set the weight of $v_{2i}v_{2i-1}$ to be $\varepsilon$ for a sufficiently small $\varepsilon>0$. Any other edge has weight $2$.

    We first argue that the weights are metric. The edges of weight $1$ are the edges incident to $v_{n}$, and the edges of weight $\varepsilon$ form a matching. Therefore, any triangle with edge weights $a\le b \le c$ has zero or two edges of weight $1$ and has at most one edge of weight $\varepsilon$. Thus, $(a,b,c)$ is in $\{(\varepsilon,1,1),(\varepsilon,2,2),(1,1,2),(2,2,2)\}$, and in all cases we have $c\le a+b$.
    Observe that $|\text{MST}(S)|=\varepsilon\times k + 1 \times k=k(1+\varepsilon)$.
    If we color $v_1,v_3,\ldots, v_{n}$ by red and $v_2,v_4,\ldots, v_{n-1}$ by blue, then $|\text{MST}(R)|=k$ and $|\text{MST}(B)|=2(k-1)=2k-2$. Thus, $|\text{MST}(R)|+|\text{MST}(B)|=3k-2$ and hence the MST-ratio is $\frac{3k-2}{k(1+\varepsilon)}=(3-\frac{4}{n-1})\frac{1}{1+\varepsilon}$.
    \qedhere
%\end{proof}

%With Theorem~\ref{thm:metricupperbound} in hand, we present a $3$-approximation algorithm for the maximum MST-ratio in metric spaces.

\begin{theorem}\label{thm:emstapprox}
There exists an $\Oh(n^2)$-time $3$-approximation algorithm for the MAX-EMST-ratio problem.    
\end{theorem}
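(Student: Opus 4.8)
The plan is to leverage the upper bound already established in Theorem~\ref{thm:metricupperbound}, which shows that in any metric space (and Euclidean space is a metric space), the maximum MST-ratio is strictly smaller than $3$. Combined with a suitable \emph{lower} bound that is easy to realize algorithmically, this sandwich will give the desired approximation factor. The key observation is that to obtain a $3$-approximation, I do not need to find the optimal coloring; I only need to exhibit in polynomial time \emph{some} coloring whose MST-ratio is at least $\gamma(P)/3$. Since $\gamma(P) < 3$, it suffices to produce a coloring with MST-ratio at least $1$, i.e.\ a coloring $P = R \cup B$ for which $|\text{EMST}(R)| + |\text{EMST}(B)| \ge |\text{EMST}(P)|$.

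First I would observe that producing a coloring of MST-ratio at least $1$ is in fact always possible by a simple edge-based argument on $\text{EMST}(P)$ itself. Compute $\text{EMST}(P)$, which takes $\Oh(n^2)$ time (or faster), and then color the points so that at least $n-2$ of the tree's edges become monochromatic; for instance, a proper $2$-coloring of the tree (bipartition by depth parity) makes \emph{every} tree edge colorful, whereas the opposite extreme is desired, so instead I would color along the tree to keep long edges within a single color class. The cleanest route is to pick the coloring that aligns the color classes with the structure of $\text{EMST}(P)$ so that the red tree plus the blue tree together reuse most of the weight of $\text{EMST}(P)$. Concretely, removing a single edge of $\text{EMST}(P)$ splits it into two subtrees; coloring one subtree red and the other blue yields $|\text{EMST}(R)| + |\text{EMST}(B)| \ge |T| - w_{\max}$, where $w_{\max}$ is the heaviest edge; dividing by $|\text{EMST}(P)| = |T|$ gives a ratio that may dip below $1$, so a more careful balanced split or a direct appeal to the constructive bound is needed.

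The main obstacle, then, is guaranteeing a ratio of at least $1$ rather than merely close to $1$. To handle this cleanly I would instead invoke the average-case bound of Theorem~\ref{thm:averagemst}, which states that the average EMST-ratio over all colorings is at least $\frac{n-2}{n-1}$. Since the maximum over colorings is at least the average, we get $\gamma(P) \ge \frac{n-2}{n-1}$ automatically, but this is a bound on the optimum, not on a coloring we can efficiently compute. The resolution is that the proof of the average bound is itself constructive enough to identify, in polynomial time, a single coloring meeting the average; alternatively, one directly computes $\text{EMST}(P)$ and selects the best of the $n-1$ colorings obtained by removing each tree edge in turn and $2$-coloring the resulting two components, each of which is evaluated in $\Oh(n^2)$ time, and by an averaging argument over these $n-1$ candidate colorings at least one achieves ratio at least $\frac{n-2}{n-1} \ge 1$ for $n \ge 2$.

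Putting the pieces together: the algorithm computes $\text{EMST}(P)$, generates a polynomial family of candidate colorings derived from the tree's edges, evaluates each in $\Oh(n^2)$ time, and outputs the best one, call its ratio $\mu$; we will have argued $\mu \ge 1$, while Theorem~\ref{thm:metricupperbound} gives $\gamma(P) < 3$, whence $3\mu > \gamma(P)$, establishing the $3$-approximation. The total running time is dominated by the MST computations, each $\Oh(n^2)$, over a linear number of candidates, but since the candidates share the base tree this collapses to $\Oh(n^2)$ overall. I expect the genuinely delicate step to be the precise verification that \emph{some} explicitly constructible coloring attains ratio at least $1$ (as opposed to the nonconstructive existence via the average), since the sandwich fails if we can only certify the \emph{optimum} exceeds $1$ without being able to exhibit a witnessing coloring.
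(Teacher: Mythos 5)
Your architecture is essentially the paper's --- the upper bound from Theorem~\ref{thm:metricupperbound} combined with a coloring obtained by deleting one edge of $T=\text{EMST}(P)$ and coloring the two components (the paper deletes the \emph{shortest} edge) --- but your closing step contains a genuine error. You claim that some candidate attains ratio at least $\frac{n-2}{n-1}\ge 1$ for $n\ge 2$; in fact $\frac{n-2}{n-1}<1$ for every $n$, and, more fundamentally, a coloring of ratio at least $1$ need not exist at all: for three points at the vertices of a unit equilateral triangle, every proper coloring has ratio $\frac{1}{2}$, so even the optimum $\gamma(P)$ can be below $1$ (the paper only cites $\gamma(P)>1$ for $n\ge 12$ points in the plane, and this theorem must hold for all $n$ and all dimensions). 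So the sandwich you lean on, ``$\mu\ge 1$ and $\gamma(P)<3$,'' is unavailable, and the merely qualitative bound $\gamma(P)<3$ is too weak to pair with $\mu\ge\frac{n-2}{n-1}$: knowing only $\gamma<3$, the optimum could a priori exceed $3\cdot\frac{n-2}{n-1}=3-\frac{3}{n-1}$. What closes the gap --- and is exactly the paper's argument --- is the \emph{quantitative} form of Theorem~\ref{thm:metricupperbound}, namely $\gamma(P)\le 3-\frac{4}{n-1}$ for $n\ge 5$. Deleting the shortest tree edge $e_1$ (weight $w_1$) and coloring the two components red and blue makes $e_1$ colorful, so $\text{EMST}(R)\cup\text{EMST}(B)\cup\{e_1\}$ spans $P$ and hence $|\text{EMST}(R)|+|\text{EMST}(B)|\ge |T|-w_1\ge \bigl(1-\frac{1}{n-1}\bigr)|T|$, using $w_1\le |T|/(n-1)$. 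Then $3\cdot\frac{n-2}{n-1}=\frac{3n-6}{n-1}\ge\frac{3n-7}{n-1}=3-\frac{4}{n-1}\ge\gamma(P)$, which is precisely the factor-$3$ guarantee; the finitely many small values of $n$ can be handled by brute force.

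On the running time: your plan to evaluate all $n-1$ edge-deletion candidates requires computing monochromatic EMSTs for $n-1$ distinct bipartitions, which is $\Oh(n^3)$ as described, and the assertion that sharing the base tree ``collapses'' this to $\Oh(n^2)$ is unsupported. But no evaluation is needed: the shortest-edge candidate comes with the certified lower bound above, so the algorithm simply computes one EMST in $\Oh(n^2)$ time and outputs that single coloring. Your instinct that the delicate step is exhibiting an explicit coloring with a certified ratio (rather than a nonconstructive bound on the optimum) was correct --- the certificate just has to be $\frac{n-2}{n-1}$ rather than $1$, matched against $3-\frac{4}{n-1}$ rather than $3$.
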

\begin{proof}
    By Theorem~\ref{thm:metricupperbound}, the maximum EMST-ratio for an $n$-element point set is upper bounded  by $3-\frac{4}{n-1}$.
    On the other hand, one can achieve an EMST-ratio of at least $\frac{n-2}{n-1}$ by first computing a minimum spanning tree of the point set in $\Oh(n^2)$ (see \cite{CLRS,Prim57}), and then coloring the vertices of the two components of it after removing the shortest edge by red and blue. Finally, as $\frac{n-2}{n-1} \geq \frac{(3-\frac{4}{n-1})}{3}$, this is an $\Oh(n^2)$ time $3$-approximation algorithm.  
    \qedhere
\end{proof}

We remark that a slightly improved running time for computing an EMST is provided in \cite{AESW07}. Moreover, one could achieve a weaker approximation factor $3.47$ in a simpler way using a trivial upper bound $\frac{2}{\rho_d}\le \frac{2}{0.577}< 3.47$ on the MAX-EMST-ratio.
Indeed, if we restrict to $\Rspace^2$ (i.e. $2$-MAX-EMST-ratio), the approximation factor in Theorem~\ref{thm:emstapprox} can be improved to $2.427$, because the upper bound for the maximum EMST-ratio in the plane is $ \approx 2.427$ (see \cite{CDES24}), and for $n>12$ a coloring with EMST-ratio greater than $1$ can be computed in polynomial time (see \cite{DPT23}).
As can be observed, the coloring used to present the $3$-approximation is fairly straightforward. However, there seems to be room for improvement by identifying more mingled colorings. 

We introduce \textit{Bipartite Colorings} as good candidates to approximate the maximum EMST-ratio of a point set. By a bipartite coloring of a point set, we mean considering one of the Euclidean minimum spanning trees and then partitioning the point set into two subsets labeled by colors in which every edge of the taken EMST is colorful. Note that since the EMST of a point set is not necessarily unique, it may have more than one bipartite coloring, namely, one for each EMST. Intuitively, bipartite colorings give well-distributed colorings of the point set that can be computed in $\Oh(n^2)$ time ($O(n\log n)$ in the plane, see \cite{BCKO08}). Our experiments suggest that bipartite colorings give a very good approximation of the maximum EMST-ratio for random point sets in the plane (see Section~\ref{sec:experiments}). 

Although bipartite colorings seem to be effective, they do not always produce the maximum EMST-ratio. Here is an extreme example:
Consider $n=2k+1$ points on a zigzag path in the triangular grid and call them a \textit{Triangular Chain} (see Figure~\ref{fig:BipartitevsMax}). By stretching the chain slightly, one can ensure a unique EMST and hence a unique bipartite coloring. The maximum EMST-ratio of a triangular chain with $2k+1$ points is $\frac{3k-2}{2k}$ (The proof is provided in Section~\ref{subsec:zigzagproof}), while the ratio for a bipartite coloring is $\frac{2k-1}{2k}$ . As $k$ increases, the proportion between the maximum EMST-ratio and this bipartite EMST-ratio approaches $1.5$. 

% \vspace{-0.1cm}
\begin{figure}[ht]
    \centering
    \includegraphics[width=0.9\linewidth]{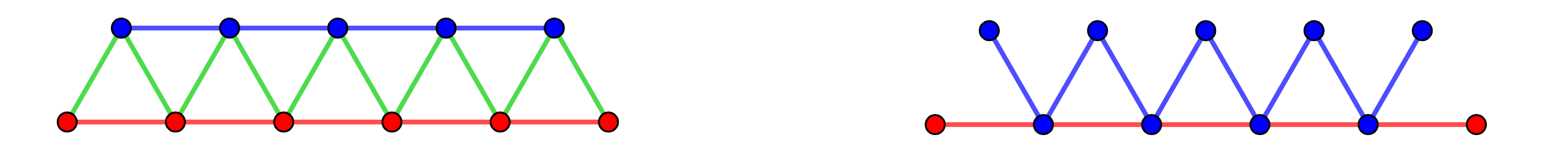}
    \caption{An example of a point set where the maximum EMST-ratio has a significant gap from the EMST-ratio of a bipartite coloring. The left panel shows a triangular chain for $k=5$, its EMST in green, and the red and blue EMSTs in a bipartite coloring.
    The right panel presents the coloring that achieves the maximum EMST-ratio.}
    \label{fig:BipartitevsMax}
\end{figure}
% \vspace{-0.1cm}
We conjecture that this is the worst approximation factor of the maximum EMST-ratio that can be obtained from a bipartite coloring for a point set in the plane. If this conjecture is correct, it would lead to an $O(n\log n)$ time algorithm that achieves a $(1.5)$-approximation for the $2$-MAX-EMST-ratio problem.

\begin{conjecture} For every point set in $\Rspace^2$, the EMST-ratio for any bipartite coloring is a $(1.5)$-approximation for the maximum EMST-ratio.
\end{conjecture}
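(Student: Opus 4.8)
The plan is to fix a Euclidean minimum spanning tree $T$ of $P$ and the (up to swapping colors, unique) bipartite coloring $R_{\mathrm{bip}}$ it induces, and to prove the equivalent but more convenient statement that $\mu(P,R)\le \tfrac{3}{2}\,\mu(P,R_{\mathrm{bip}})$ holds for \emph{every} coloring $R$, not merely the optimal one. Since all these ratios share the denominator $|\text{EMST}(P)|=|T|$, this reduces to a purely numerator inequality: the value $|\text{EMST}(R)|+|\text{EMST}(P\setminus R)|$ of an arbitrary coloring is at most $\tfrac{3}{2}$ times the value $V_{\mathrm{bip}}$ of the bipartite coloring. The first thing I would record is that the two off-the-shelf ingredients do \emph{not} suffice: combining the planar upper bound $\gamma(P)<2.427\,|T|$ of Cultrera et al.~\cite{CDES24} with a naive lower bound of the form $V_{\mathrm{bip}}\gtrsim |T|$ yields only a factor close to $2.427$. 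The triangular chain is the instructive extremal case, with $V_{\mathrm{bip}}=2k-1\approx |T|$ and maximum value $3k-2\approx \tfrac{3}{2}|T|$, so the argument must capture a genuine \emph{correlation} between the two quantities rather than bounding them independently.

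The key realization is that this correlation is two-sided: in configurations admitting a near-extremal coloring (value $\approx 2.427\,|T|$), the conjecture itself forces $V_{\mathrm{bip}}\gtrsim 1.6\,|T|$, whereas in the triangular chain both quantities are small together. Hence neither $V_{\mathrm{bip}}$ nor the maximum value is close to a fixed multiple of $|T|$; both scale with a configuration-dependent ``mingling potential'' $\Phi(T)$ of the tree. The strategy I would pursue is therefore to define such a $\Phi(T)$ and prove the two matching estimates $\text{value}(R)\le \Phi(T)$ for every coloring $R$ and $V_{\mathrm{bip}}\ge \tfrac{2}{3}\,\Phi(T)$. For the upper bound I would refine the path--double--cover of Theorem~\ref{thm:metricupperbound}: rerouting each monochromatic connecting edge along its $T$-path covers every tree edge at most three times, and one then subtracts the leaf edges and the heaviest non-leaf path as in that proof, but now sharpening the constants using that a planar EMST has maximum degree at most $6$ and that any two edges meeting at a vertex make an angle of at least $60^{\circ}$. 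For the lower bound on $V_{\mathrm{bip}}$ I would exploit that each color class is an independent set of $T$, so consecutive same-colored vertices along any $T$-path share an opposite-colored neighbor; the $60^{\circ}$ angle property then guarantees that the detour edge replacing two incident tree edges is not too short, producing a per-vertex contribution to $V_{\mathrm{bip}}$ that matches $\tfrac{2}{3}$ of the per-vertex contribution to $\Phi(T)$.

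The hard part will be closing the gap between the metric factor $3$ and the target factor $\tfrac{3}{2}$. The path--double--cover is tight for abstract trees irrespective of the embedding, so the entire improvement must come from planarity, and I do not expect the degree bound alone to be enough (it is consistent with ratios above $\tfrac{3}{2}$). I anticipate the decisive step to be a charging argument carried out simultaneously at each vertex $v$ of $T$ of degree at least $3$, accounting in parallel for how an arbitrary coloring and the fixed bipartite coloring reroute through $v$; the local angle constraints should be convertible into a single per-vertex inequality whose global sum is exactly the $\tfrac{3}{2}$ bound, with the triangular chain arising as the equality case. Identifying the correct $\Phi$ so that both inequalities hold with the chain tight is the crux of the difficulty.

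Finally, two technical points require attention. The EMST need not be unique, and the conjecture quantifies over the bipartite coloring attached to each EMST; I would handle this with a perturbation argument (as already used for the triangular chain) to reduce the general case to the generic one with a unique $T$, then argue continuity of both values. In addition, the small cases should be checked directly: the clean planar facts invoked above, and the fact that a coloring of ratio greater than $1$ is efficiently computable, are only guaranteed for $n>12$ by Dumitrescu et al.~\cite{DPT23}, so instances with $n\le 12$ would be verified separately before the asymptotic charging argument takes over.
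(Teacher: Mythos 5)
First, a point of calibration: the statement you are trying to prove is stated in the paper as an open \emph{conjecture} --- the paper offers no proof of it, only the triangular-chain example showing that the factor $1.5$ would be tight (maximum value $3k-2$ versus bipartite value $2k-1$ on $2k+1$ points). So there is no ``paper proof'' to match; what matters is whether your proposal actually closes the problem, and it does not. What you have written is a research plan whose central object, the ``mingling potential'' $\Phi(T)$, is never defined, and neither of the two inequalities the whole argument rests on --- $\mathrm{value}(R)\le \Phi(T)$ for every coloring $R$, and $V_{\mathrm{bip}}\ge \tfrac{2}{3}\Phi(T)$ --- is established or even reduced to a concrete lemma. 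You acknowledge this yourself (``Identifying the correct $\Phi$ \ldots is the crux of the difficulty''), and the motivating observation that near-extremal configurations would force $V_{\mathrm{bip}}\gtrsim 1.6\,|T|$ is derived \emph{from the conjecture}, so it carries no proof content. As it stands, the proposal correctly diagnoses why independent bounds cannot work (the $2.427$ upper bound of \cite{CDES24} against $V_{\mathrm{bip}}\gtrsim|T|$ only gives $\approx 2.427$, and the metric bound of Theorem~\ref{thm:metricupperbound} is tight in general metric spaces, so any improvement must use planarity), but diagnosis is where it stops.

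Beyond the missing centerpiece, the one quantitative ingredient you do sketch is too weak. For the lower bound on $V_{\mathrm{bip}}$ you argue that two same-colored vertices $u_1,u_2$ at tree-distance $2$ through a vertex $w$, with edge lengths $a=|u_1w|$, $b=|u_2w|$ and angle at least $60^{\circ}$, yield a not-too-short detour edge; but the law of cosines gives only $|u_1u_2|^2\ge a^2+b^2-ab$, which is \emph{below} $\max(a,b)^2$ whenever $a\ne b$ (e.g.\ $a=1$, $b=\varepsilon$ gives $|u_1u_2|\approx 1-\varepsilon/2$, while the two tree edges it must ``pay for'' have total length $1+\varepsilon$). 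So the per-vertex charging you hope will sum to exactly the $\tfrac{3}{2}$ bound has no stated local inequality behind it, and the degree-$\le 6$ and $60^{\circ}$ facts alone are, as you concede, consistent with ratios above $\tfrac{3}{2}$. The tie-breaking/perturbation point is also shakier than you suggest: the conjecture quantifies over \emph{every} EMST's bipartite coloring, and since you cannot move the points, there is no obvious perturbation that makes an arbitrarily prescribed EMST of a degenerate configuration the unique one, so the reduction ``to the generic case'' needs an argument of its own. In short: the framing (a two-sided correlation through a tree-dependent potential, with the triangular chain as the equality case) is a sensible and possibly fruitful attack on this open problem, but no step of it is proved, and the one estimate made explicit is quantitatively insufficient.
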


Apart from the bipartite colorings, we are interested in any improved approximation algorithm for the MAX-EMST-ratio that can be obtained by a simple mixed coloring of a point set.

\begin{question}
    Does a polynomial-time $2$-approximation algorithm exist for MAX-EMST-ratio?
\end{question}

\section{Average MST-ratio.}\label{sec:average}
The \textit{Average EMST-ratio} of a point set $P$ is the average of EMST-ratio over all proper $2$-colorings of $P$. It is automatically a lower bound for the maximum EMST-ratio. Besides, it could be considered as a measure to see how close a point set is to a typical random point set, as we know the behavior of random point sets. In fact, it is proved that the expected value for the average EMST-ratio of a random point set in a unit square is $\sqrt{2}$ in the limit \cite{DS24}. We present the proof for the generalized version of this result, i.e. for points within a $d$-dimensional unit cube (or any bounded region in $\Rspace^d$).

\begin{theorem}\label{thm:randomaverage}
    For a set of $n$ random points uniformly distributed in $[0,1]^d$ the expected value of the average EMST-ratio tends to $\sqrt[d]{2}$ as $n$ goes to infinity. 
\end{theorem}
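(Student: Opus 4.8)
The plan is to rewrite the quantity of interest as a single joint expectation over the random points and a random coloring, and then to apply the Beardwood--Halton--Hammersley (BHH) limit~\eqref{equ:beta} separately to the full point set and to each color class, closing the argument with a bounded-convergence step. Denote by $\overline{\mu}(P)$ the average EMST-ratio over all proper $2$-colorings of $P$.

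First I would observe that, since the denominator $|\text{EMST}(P)|$ is the same for every coloring, linearity of averaging yields
\[
\mathbb{E}_P\bigl[\overline{\mu}(P)\bigr] \;=\; \mathbb{E}_{P,\chi}\!\left[\frac{|\text{EMST}(R)|+|\text{EMST}(B)|}{|\text{EMST}(P)|}\right],
\]
where $\chi$ is a uniformly random proper $2$-coloring of $P$ into red $R$ and blue $B$. I would then replace $\chi$ by the model in which each point is independently colored red or blue with probability $\tfrac12$; the two models disagree only on the two monochromatic colorings, of total probability $2^{1-n}$, and since the ratio is bounded this changes the expectation by $O(2^{-n})$, which is negligible in the limit.

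Next I would place everything on one probability space carrying an i.i.d.\ uniform sequence $X_1,X_2,\dots$ in $[0,1]^d$ together with an independent sequence of fair coin flips assigning colors. Writing $P_n=\{X_1,\dots,X_n\}$, and $R_n$, $B_n$ for the red and blue subsets with $m_n=|R_n|$, the red points in index order again form an i.i.d.\ uniform sequence, while the strong law gives $m_n/n\to\tfrac12$ almost surely. Applying~\eqref{equ:beta} to $P_n$, to $R_n$, and to $B_n$, and rescaling the color classes via $(m_n/n)^{1-1/d}\to(1/2)^{1-1/d}$, I obtain almost surely
\[
\frac{|\text{EMST}(P_n)|}{n^{1-1/d}}\to\beta(d),\qquad
\frac{|\text{EMST}(R_n)|}{n^{1-1/d}},\ \frac{|\text{EMST}(B_n)|}{n^{1-1/d}}\to\beta(d)\Bigl(\tfrac12\Bigr)^{1-1/d}.
\]
Summing the two color limits and dividing by the limit for $P_n$ shows that the ratio converges almost surely to $2\,(1/2)^{1-1/d}=2^{1/d}=\sqrt[d]{2}$, the factor $\beta(d)$ cancelling.

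Finally I would upgrade almost-sure convergence of the integrand to convergence of its expectation. The decisive tool here is Theorem~\ref{thm:metricupperbound}: the EMST-ratio of every coloring is at most $3$, so the integrand is uniformly bounded and the bounded convergence theorem yields $\mathbb{E}_P[\overline{\mu}(P)]\to\sqrt[d]{2}$. The main obstacle is exactly that we must control the expectation of a \emph{ratio} rather than a ratio of expectations: a priori a rare sample with an atypically short $\text{EMST}(P_n)$ could make the ratio large and spoil the interchange of limit and expectation. The uniform upper bound from Theorem~\ref{thm:metricupperbound} is precisely what rules this out, letting the joint almost-sure convergence of the three tree lengths pass to the expected value.
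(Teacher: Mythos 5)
Your proposal is correct, and it reaches the limit by a genuinely different route than the paper. The paper works coloring-by-coloring: it groups the $2^n-2$ proper bi-partitions by the size $k$ of the red class, applies \eqref{equ:beta} to each color class for $k_0\le k\le n-k_0$, discards the extreme values of $k$ using the uniform bound of $3$, and recognizes the resulting weighted sum
\[
\sum_{k=k_0}^{n-k_0}\frac{\bigl[k^{1-1/d}+(n-k)^{1-1/d}\bigr]\binom{n}{k}}{n^{1-1/d}\,(2^n-2)}
\]
as (essentially) a Bernstein polynomial of $f(x)=x^{1-1/d}+(1-x)^{1-1/d}$ evaluated at $x=\tfrac{1}{2}$, citing \cite{Ber12,KS07} for its convergence to $f(\tfrac{1}{2})=\sqrt[d]{2}$. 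You instead randomize the coloring (swapping the uniform proper coloring for i.i.d.\ fair coins at an $O(2^{-n})$ cost), use the strong law to get $m_n/n\to\tfrac{1}{2}$, apply \eqref{equ:beta} along the red and blue subsequences, and finish with bounded convergence via Theorem~\ref{thm:metricupperbound}; in effect you inline the probabilistic proof of Bernstein-polynomial convergence that the paper outsources to its references. Your route is more self-contained and in one respect tighter: the paper asserts $|\text{EMST}(R)|\sim\beta(d)\,k^{1-1/d}$ simultaneously for all $k$ in a range ``with probability approaching $1$,'' a uniformity it does not justify, whereas your coupling invokes \eqref{equ:beta} only along three fixed point sequences, together with the observation (which you make explicitly) that the red points in index order are i.i.d.\ uniform and independent of $m_n$, so the random index $m_n\to\infty$ may be substituted into the almost-sure limit. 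What the paper's computation buys in exchange is brevity and an explicit finite-$n$ formula for the average. Both arguments ultimately rest on the same two external inputs: the Beardwood--Halton--Hammersley limit \eqref{equ:beta} and the uniform bound of $3$ from Theorem~\ref{thm:metricupperbound}.
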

\begin{proof}
    Let $n$ be large enough and $P$ be a set of $n$ random points uniformly distributed in $[0,1]^d$. 
    %Recalling (\ref{equ:beta}), for appropriate $\varepsilon$, let $k_0$ be such that for a set $P$ of $n \geq k_0$ points uniformly distributed in $[0,1]^d$ the value of $|EMST(P)|/n^{1-1/d}$ is at most $\varepsilon$ away from $\beta(d)$ with probability one as $n$ grows, where $\beta(d)$ is a constant.
    Consider all $2^n -2$ proper bi-partitions $P = R \cup B$ into red and blue points. 
    Clearly, for $1 \leq k \leq n-1$, in $\binom{n}{k}$ of such bi-partitions $|R|=k, |B|=n-k$. Ignoring the relatively small $k$ or $n-k$, by (\ref{equ:beta}), we know that for every $k_0 \leq k \leq n-k_0$ we have 
    \begin{align*}
    |\text{EMST}(R)| \sim &\beta(d).k^{1-1/d}, \hspace{2mm} |\text{EMST}(B)| \sim \beta(d).(n-k)^{1-1/d},\\ &|\text{EMST}(P)| \sim \beta(d).n^{1-1/d}
    \end{align*}
    with probability approaching $1$, where $\beta(d)$ is the constant in (\ref{equ:beta}), and $k_0$ is an appropriate constant.
    Since $n$ is significantly greater than $k_0$, and the EMST-ratio is at most $3$, the contribution of the EMST-ratios for colorings with $k < k_0$ red points or with $n-k < k_0$ blue points are negligible when computing the average EMST-ratio. Thus, the average EMST-ratio over all proper bi-partitions of $P$ is
    \begin{align*}
    \sum_{P=R \cup B} \frac{|\text{EMST}(R)|+|\text{EMST}(B)|}{|\text{EMST}(P)|} &\sim \sum_{k=k_0}^{n-k_0} \frac{[\beta(d).k^{1-1/d} + \beta(d).(n-k)^{1-1/d}].\binom{n}{k}}{\beta(d).n^{1-1/d}.(2^n-2)}\\
    &= \sum_{k=k_0}^{n-k_0} \frac{[k^{1-1/d} + (n-k)^{1-1/d}].\binom{n}{k}}{n^{1-1/d}.(2^n-2)}
    \end{align*}
    with probability $1$ as $n$ goes to infinity. The right-hand side is a special case of convergence of Bernstein polynomials (see \cite{Ber12,KS07}), and therefore converges to $\sqrt[d]{2}$ as $n$ goes to infinity.
    \qedhere
\end{proof}

% We remark that Theorem~\ref{thm:randomaverage} remains valid if the unit cube is replaced by any bounded region. That is because the convergence in (\ref{equ:beta}) applies to random points in any bounded region of area $1$, see \cite{BHH59}.
%\begin{question}
%Is it true that for a large enough point set, the average MST-ratio is always greater than $1$?
%\end{question}
% \vspace{-0.7cm}
\begin{figure}[ht]
    \centering
        \begin{tikzpicture}
            \node at (0,0){\includegraphics[width=0.45\linewidth]{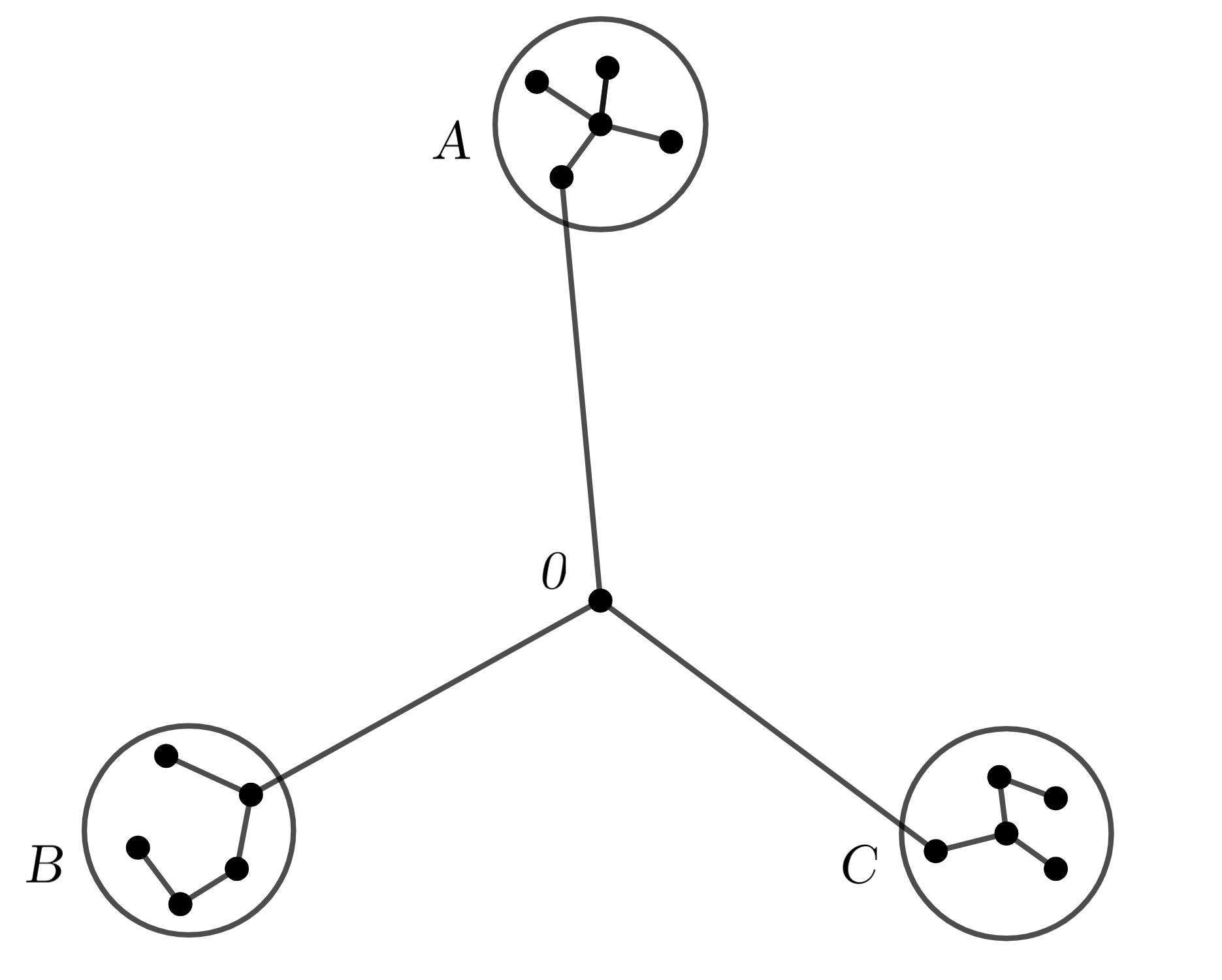}};
            \node at (-0.2,-2.6){$(a)$};
            \node at (6,0.2){\includegraphics[width=0.4\linewidth]{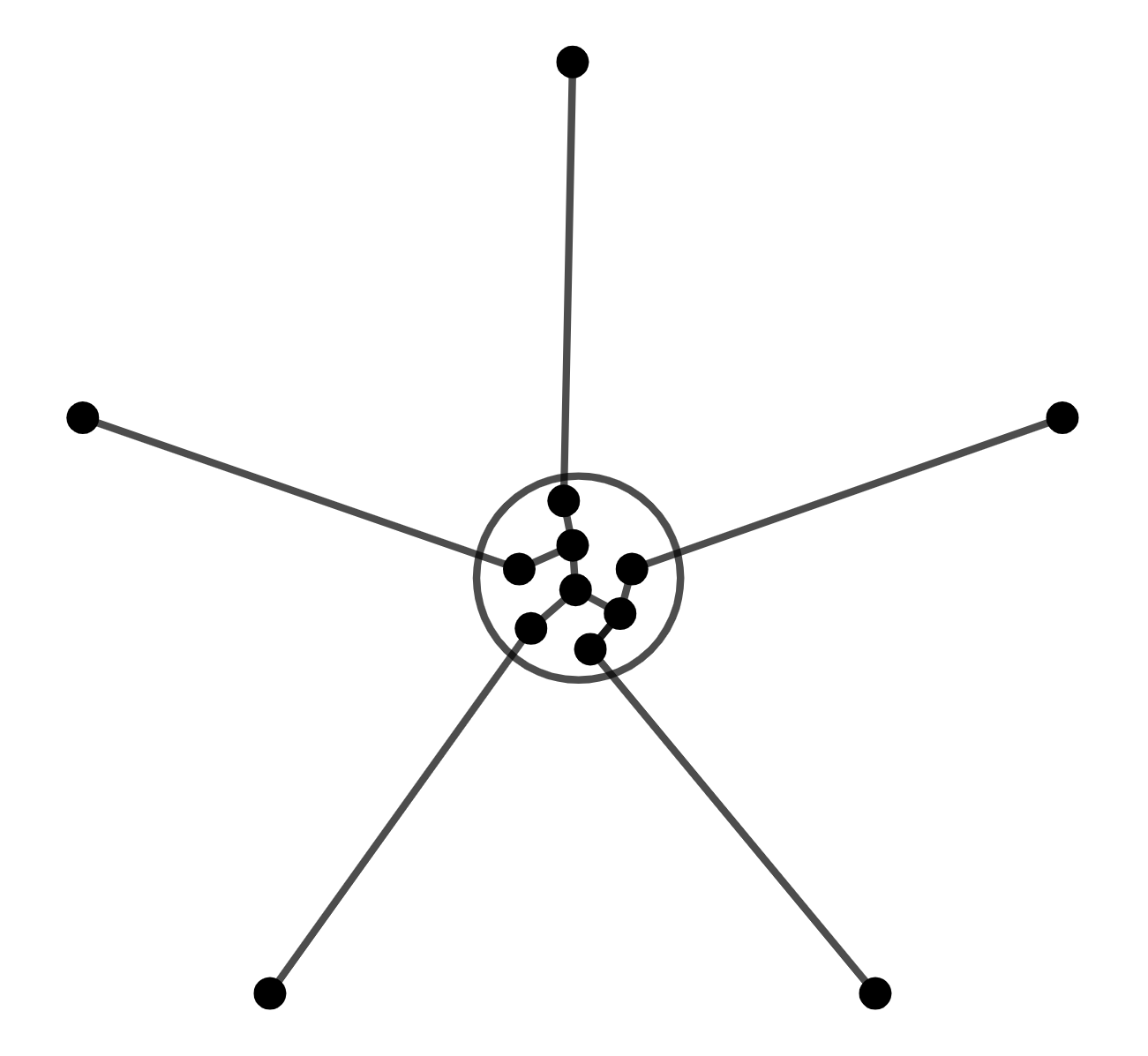}};
            
            \node at (6,-2.6){$(b)$};
        \end{tikzpicture}
        \caption{(a) A point set whose average EMST-ratio approaches approximately $2.154$.\\
        (b) A point set with average EMST-ratio $<1$. 
        The point set consists of $n-5$ points placed in a small circle (the core) and $5$ outer points as vertices of a regular pentagon inscribed in the unit circle around the origin. The corresponding MST is depicted.}
    \label{fig:averag/averagsmall}
\end{figure}
% \vspace{-0.1cm}
Certainly, Theorem~\ref{thm:metricupperbound} implies that $3$ is an upper bound for the average EMST-ratio of point sets in Euclidean space. 
How much can this upper bound be improved?
There exist point sets in $\Rspace^2$ with an average EMST-ratio $ \approx 2.154$.
For instance, consider the unit circle centered at the origin in the plane, an equilateral triangle $abc$ on it, and then a point set $P=A\cup B \cup C \cup \{0\}$, where $A$ (resp. $B$, $C$) consists of many points that lie within distance $\varepsilon$ of $a$ (resp. $b$, $c$), for $\varepsilon$ being small enough (see Figure~\ref{fig:averag/averagsmall}.a). In this case, the average EMST-ratio approaches $\frac{3+2\sqrt{3}}{3} \approx 2.154$ as the number of points increases, because with high~probability, in a random bi-partition, each of the sets $A$, $B$, and $C$ will contain both colors. Regarding the lower bound, we show that the average EMST-ratio for $n$ points is~always at least $\frac{n-2}{n-1}$. This is a stronger result than the first part of Theorem~2 in \cite{DPT23}.

\begin{theorem}\label{thm:averagemst}
    For every set of $n$ points in Euclidean space, the average EMST-ratio is at least $\frac{n-2}{n-1}$.
\end{theorem}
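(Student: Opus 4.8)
Since $|\text{EMST}(P)|$ is fixed once the point set is given, writing $T$ for an EMST of $P$ and averaging over the uniform distribution on the $2^n-2$ proper colorings, the claim is equivalent to
\[
\mathbb{E}\big[\,|\text{EMST}(R)|+|\text{EMST}(B)|\,\big]\;\ge\;\tfrac{n-2}{n-1}\,|T|.
\]
My approach is a \emph{scale sweep} (single-linkage) representation of spanning-tree length: for a nonempty point set $X$ in any metric space, $|\text{EMST}(X)|=\int_0^\infty(\kappa_X(t)-1)\,dt$, where $\kappa_X(t)$ is the number of connected components of the graph on $X$ joining pairs at distance at most $t$. Applied to $X=P$ this recovers $|T|$, because the components of $P$'s $t$-graph are exactly the pieces of $T$ left after deleting all edges of weight $>t$ (MST thresholding equals single-linkage clustering). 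The point of this representation is that it turns the two MST lengths into a single integral, $\mathbb{E}\big[|\text{EMST}(R)|+|\text{EMST}(B)|\big]=\int_0^\infty \mathbb{E}[\kappa_R(t)+\kappa_B(t)-2]\,dt$, which I can bound scale by scale.

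At a fixed scale $t$, let $D_1,\dots,D_m$ with $m=\kappa_P(t)$ and $|D_j|=d_j$ be the clusters of $P$ (the components of $T$ after removing heavy edges), so $\sum_j d_j=n$. Every monochromatic component of $R$ (resp.\ $B$) is contained in a single $D_j$, hence $\kappa_R(t)+\kappa_B(t)=\sum_j(r_j+b_j)$ where $r_j,b_j$ count the red/blue components inside $D_j$. A nonempty color contributes at least one component, so $r_j+b_j\ge 1+\mathbf{1}[D_j\text{ is bichromatic}]$, giving $\kappa_R+\kappa_B\ge m+\#\{\text{bichromatic clusters}\}$. Taking expectations, the probability that a size-$d$ cluster is bichromatic is the elementary quantity $p(d)=1-\tfrac{2^{\,n-d+1}-2}{2^n-2}=\tfrac{2^n-2^{\,n-d+1}}{2^n-2}$ (with $p(1)=0$, $p(n)=1$), so $\mathbb{E}[\kappa_R+\kappa_B-2]\ge (m-2)+\sum_j p(d_j)$.

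Since the cluster partition is constant on each interval between consecutive edge-weights of $T$, the integral splits into finitely many pieces, and using $|T|=\int_0^\infty(m(t)-1)\,dt$ it suffices to prove the pointwise bound $(m-2)+\sum_j p(d_j)\ge \tfrac{n-2}{n-1}(m-1)$ at every scale; writing $m=n-k$ this simplifies to $\sum_j p(d_j)\ge \tfrac{k}{n-1}$. Here $k$ is the number of tree edges already ``used'' (i.e.\ $n-k$ clusters). The hard part is that the cluster sizes $d_j$ depend on the unknown geometry of $T$, so I must make the bound uniform over all partitions of $n$ into $n-k$ parts: because $p$ is concave with $p(1)=0$, the sum $\sum_j p(d_j)$ over such partitions is minimized by one part of size $k+1$ and $n-k-1$ singletons, yielding $\sum_j p(d_j)\ge p(k+1)$. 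It therefore remains to verify the purely numerical inequality $p(k+1)\ge \tfrac{k}{n-1}$ for $0\le k\le n-2$, which follows from $p(k+1)>1-2^{-k}$ together with concavity of $k\mapsto 1-2^{-k}-\tfrac{k}{n-1}$ and its nonnegativity at the endpoints $k=0$ and $k=n-2$ (the latter using $2^{\,n-2}\ge n-1$ for $n\ge 3$). Assembling the scalewise bounds gives the theorem, and incidentally the argument holds in any metric space. I expect the main obstacle to be precisely this decoupling step: the naive bound that only charges monochromatic edges of $T$ recovers just the factor $\approx\tfrac12$, so the crux is that the sweep representation automatically accounts for the cost of reconnecting each color across a cluster, and that the resulting cluster-size dependence is tamed by concavity.
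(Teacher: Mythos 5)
Your proof is correct, and it takes a genuinely different route from the paper's. The paper argues per coloring: fixing an EMST $T$ with sorted weights $w_1\le\cdots\le w_{n-1}$, if $e_i$ is the shortest colorful edge of $T$ then $\text{EMST}(R)\cup\text{EMST}(B)\cup\{e_i\}$ is a connected spanning subgraph of $P$, so $|\text{EMST}(R)|+|\text{EMST}(B)|\ge |T|-w_i$; exactly $2^{n-j}$ proper colorings have $e_j$ as their first colorful edge, and since the geometric weights $2^{n-j}/(2^n-2)$ decrease while the $w_j$ increase, the expected deduction $\sum_j 2^{n-j}w_j/(2^n-2)$ is at most the plain average $|T|/(n-1)$, finishing in a few lines. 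You instead argue per scale via the Kruskal/single-linkage identity $|\text{EMST}(X)|=\int_0^\infty(\kappa_X(t)-1)\,dt$, lower-bound $\kappa_R(t)+\kappa_B(t)$ by the cluster count plus the number of bichromatic clusters, and tame the dependence on cluster sizes by majorization. I verified the steps: each monochromatic component does lie in a single cluster; the bichromatic probability is indeed $p(d)=\frac{2^n-2^{\,n-d+1}}{2^n-2}$ under uniform proper colorings; $p$ is concave with $p(1)=0$, and the partition $(k+1,1,\dots,1)$ majorizes every partition of $n$ into $n-k$ parts, so $\sum_j p(d_j)\ge p(k+1)$; the algebra reducing the pointwise claim to $p(k+1)\ge k/(n-1)$ is exact, and your endpoint check $2^{\,n-2}\ge n-1$ is valid. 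One small completeness point: the pointwise bound is also invoked at scales beyond the largest tree edge, i.e.\ $m=1$, $k=n-1$, which your stated range $0\le k\le n-2$ omits (and where the surrogate $1-2^{-k}$ genuinely fails); there $p(n)=1=\frac{n-1}{n-1}$, so the bound holds with equality — add one sentence. Comparatively, the paper's argument charges each coloring a single reconnection edge and is substantially shorter; yours costs the majorization machinery for the same constant but produces a scalewise inequality that is loose precisely when clusters are balanced (where $\sum_j p(d_j)$ far exceeds $p(k+1)$), hence leaves visible room for refinement, and, as you note, both arguments apply verbatim not just in any metric space but to arbitrary positively weighted graphs, matching the paper's remark after the theorem.
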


\begin{proof}
    Consider a set of $n$ points $P$ and let $e_1,...,e_{n-1}$ be the edges of an EMST of $P$, called $T$. Assume that the edge $e_i$ has weight $w_i$ and $w_1\le w_2\le \ldots \le w_{n-1}$.
    Considering a coloring $P=R \cup B$ of $P$ into red and blue points, let $i$ be the~smallest index such that the edge $e_i$ is colorful. As $\text{EMST}(R) \cup \text{EMST}(B) \cup \{e_i\}$ is a spanning tree of $P$, it holds that $|\text{EMST}(R)|+|\text{EMST}(B)|\ge|\text{EMST}(P)|-w_i$.

    On the other hand, considering a random coloring of $P$ with two colors, the probability that the edge $e_i$ is colorful is $\frac{1}{2}$, and since the $e_i$'s do not form a cycle, their corresponding events are independent. Thus, for a fixed index $j$, the probability that in a random coloring, the shortest colorful edge of $T$ is $e_j$ is $(\frac{1}{2})^j$. Therefore, there are precisely $2^{n-j}$ colorings of $P$ in which $j$ is the smallest index such that the edge $e_j$ is colorful.
    By the above argument and since there are $2^n-2$ possible (proper) colorings, the average EMST-ratio $a$ is at least:
    \begin{equation*}
    a \geq |\text{EMST}(P)|-\sum_{i=1}^{n-1}\frac{2^{n-i}w_i}{2^n-2}.    \end{equation*}
    As we have $\sum_{i=1}^{n-1}2^{n-i}=2^n-2$ and $w_1\le w_2\le \ldots \le w_{n-1}$, it can be concluded that $a\ge|\text{EMST}(P)|-\frac{1}{n-1}\cdot|\text{EMST}(P)|$,
    % \begin{equation*}
    % |EMST(P)|-\sum_{i=1}^{n-1}\frac{2^{n-i}w_i}{2^n-2}\ge |EMST(P)|-\frac{|EMST(P)|}{n-1},   \end{equation*}
    and hence the claim.  
    \qedhere
\end{proof}

Note that Theorem~\ref{thm:averagemst} holds also for general graphs.
The average EMST-ratio appears to be greater than 1 in many cases, though this is not always true. The only instances we found in $\Rspace^2$ consist of point sets where $n-6 \leq k \leq n-1$ points are clustered closely together, forming a core, and $1 \leq n-k \leq 6$ points are sufficiently far from the core such that their pairwise distances are not less than their distances to the core. We believe these are the only point sets with an average EMST-ratio of less than $1$. As an example, in Section~\ref{apndx:avg<1}, we show that the average EMST-ratio for the point set in Figure~\ref{fig:averag/averagsmall}.b is less than $1$.

% \todo{should be taken to appendix}
% \todo{}

% \begin{figure}[ht]
%   \centering
%   \includegraphics[width=0.4\linewidth]{Figures/averagelessthanone.png}
%   \caption{A point set with average EMST-ratio $<1$. The point set consists of $n-5$ points in a small circle through the origin (the core) and $5$ outer points as vertices of a regular pentagon inscribed in the unit circle around the origin. The corresponding MST is depicted.}
%   \label{fig:averagesmall}
% \end{figure}

\begin{question}
    Which point sets in $\Rspace^2$ have an average EMST-ratio $< 1$?
\end{question}

\section{Discussion}\label{sec:discussion}
In addition to the conjectures and questions posed in this paper, there are several other directions that are worth exploring further.
%\vspace{-2mm}
\paragraph*{Coloring with Many Crossings.}
For a point set $P$ in the plane, partitioned as $P = R \cup B$ of $P$, the \textit{Chromatic Crossing Number} is the number of crossings between pairs of edges, one in $\text{EMST}(R)$ and one in $\text{EMST}(B)$. Intuitively, this also measures how mixed the coloring is; the higher the number, the more mingled it appears. In fact, for any point set, it is easy to find colorings with no crossings between the two MSTs. More interesting, however, is the \textit{Maximum Chromatic Crossing Number}, which represents the highest chromatic crossing number achievable across all possible colorings of the point set.

% \vspace{-0.12cm}
\begin{figure}[ht]
  \centering
  \includegraphics[width=0.25\linewidth]{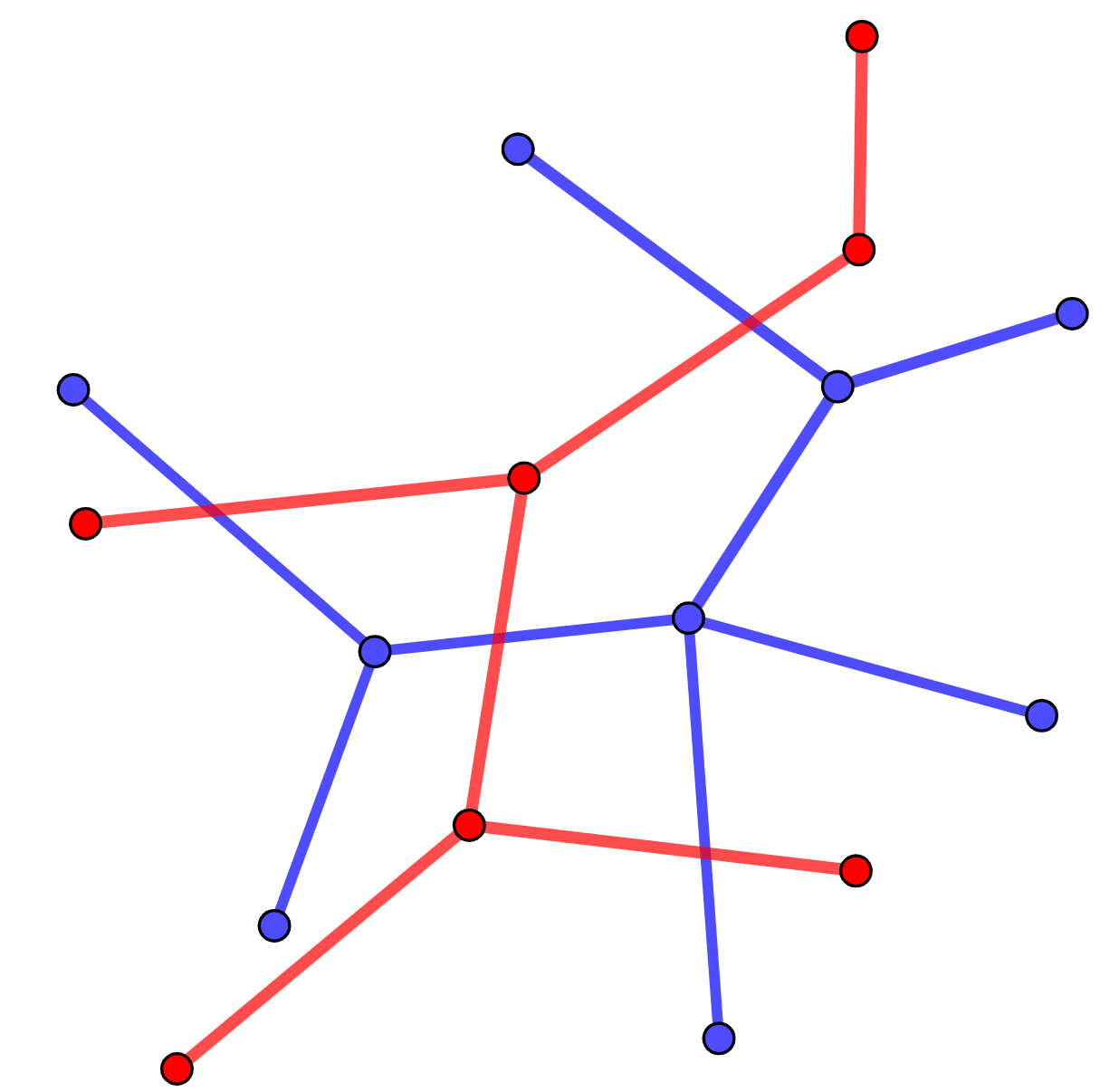}
  \caption{A point set and a bi-partition with chromatic crossing number $4$.}
  \label{fig:crossing}
\end{figure}
% \vspace{-0.12cm}
\begin{question}
    How large and how small can the maximum chromatic crossing number of a set of $n$ points in the plane be?
\end{question}

\begin{question}
   Are there relations between the chromatic crossing number and the EMST-ratio? In particular, does the coloring that maximizes the chromatic crossing number tend to have a relatively high EMST-ratio?
\end{question}

\begin{conjecture}
    Finding the maximum chromatic crossing number of a point set in the plane is NP-hard.
\end{conjecture}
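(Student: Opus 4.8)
The plan is to reduce from a planar NP-hard problem — most naturally Planar 3-SAT — by encoding a formula as a point set whose maximum chromatic crossing number is governed by the number of simultaneously satisfiable clauses. Since a crossing is an inherently planar notion and we are free to place points anywhere in $\Rspace^2$, a gadget-based reduction with variable, clause, and wire gadgets is the natural framework; a planar maximization problem such as planar \textsc{Max-Cut} is a plausible alternative source, but the decision form of Planar 3-SAT fits a threshold argument most cleanly.

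First I would design a \emph{wire} gadget: a long, thin cluster of densely packed points placed along a polyline so that, in the colorings of interest, the monochromatic edges lying on the wire form a single path threading along it. Each wire can be in a ``red state'' (its points are red, carrying a red path) or a ``blue state'' (a blue path), and this state propagates along the wire's length. Because a crossing is counted only between a red-tree edge and a blue-tree edge, two wires meeting transversally contribute a crossing exactly when they are in opposite states; the incidence pattern of the wires therefore encodes the logical wiring of the formula. A \emph{variable} gadget would be a cyclic arrangement of wires admitting only two locally optimal consistent states, corresponding to the two truth values, and a \emph{clause} gadget would be a junction that gains a bonus crossing precisely when at least one of its three literal wires arrives in its satisfying state.

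The quantitative heart of the argument is a gap. I would tune the point spacings so that any inconsistent coloring of a variable or wire strictly loses more crossings than it could ever recover elsewhere, forcing every optimal coloring to be ``legal,'' and so that the maximum number of crossings equals a fixed baseline plus the number of satisfied clauses. Deciding whether the maximum chromatic crossing number meets the baseline-plus-$m$ threshold then decides satisfiability of the $m$-clause formula. Polynomiality is immediate, as each gadget uses $O(1)$ points with coordinates of bounded bit-length.

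The main obstacle — and the reason this remains a conjecture — is that the EMST of a color class is a \emph{global} object: recoloring a single point can reroute edges arbitrarily far away, so gadgets do not decompose cleanly the way they do for the crossing number of a fixed graph. Controlling this requires making every intra-gadget monochromatic connection so much shorter than any inter-gadget edge that the local EMST structure is forced regardless of the global coloring, while still leaving the long inter-gadget edges free to realize (or not) the crossings that carry the logical signal. A secondary but genuine constraint is the degree-$6$ bound on planar EMSTs, which caps the fan-out of any single point and forces wires and junctions to be assembled from clusters rather than from hubs. Reconciling these two demands — local rigidity of the monochromatic forests together with enough long-edge freedom to encode a hard combinatorial gap — is, I expect, where the real technical difficulty lies.
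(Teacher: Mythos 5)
You should first note that the paper offers no proof of this statement at all: it appears only as a conjecture in the Discussion section, so there is no argument of the authors' to compare yours against. Your submission must therefore stand on its own, and as written it is a research plan rather than a proof --- which you yourself concede when you write that the global behavior of the EMST is ``the reason this remains a conjecture.'' A reduction is not established by naming the gadgets it would need; none of the wire, variable, or clause gadgets is actually constructed, and the central quantitative claim --- that the maximum chromatic crossing number equals a fixed baseline plus the number of satisfied clauses, with a strict gap penalizing ``illegal'' colorings --- is asserted without any lemma that could support it.

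The most concrete obstruction is the one you gesture at but do not overcome: the two monochromatic EMSTs are spanning trees of their \emph{entire} color classes, so distinct red gadgets must be joined by long red edges (and likewise for blue), and the maximizing adversary is free to choose colorings in which these long inter-gadget edges sweep across the whole construction and generate crossings that have nothing to do with the formula. Your ``fixed baseline'' therefore does not exist until you can upper-bound the crossings achievable by such cheating colorings --- but bounding how large the maximum chromatic crossing number of an $n$-point set can be is itself posed as an open question in the same section of the paper, so your threshold argument presupposes an unresolved problem. A secondary flaw: for two wires in opposite states to contribute a crossing, a short red edge must geometrically cross a short blue edge, which forces points of the two wires into mutual proximity at the junction; at that scale nothing prevents the red EMST from rerouting through the junction region rather than staying on its wire, so even the local signal-carrying behavior of your gadgets is not forced by ``dense packing'' alone. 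Until you can prove a local-rigidity lemma for the monochromatic trees \emph{and} a global upper bound on formula-independent crossings, the reduction does not go through; identifying these two missing lemmas precisely is the real value of your write-up.
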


\paragraph*{Maximum MST Length over Subsets.} A related question to finding the maximum MST-ratio is the \textit{MAX-EMST-subset} problem. Here, given a point set $P$, we aim to find the maximum EMST length over all subsets $Q \subseteq P$. This extends to the abstract setting as the \textit{MAX-MST-subset} problem, where, given a weighted graph $G$, we seek the maximum MST weight over all induced subgraphs $H \subseteq G$.

\begin{question}
    Are MAX-MST-subset and MAX-EMST-subset NP-hard?
\end{question}
\paragraph*{More Colors.}
Similar questions can be asked for point sets with $c>2$ colors. To express the MST-ratio, we consider the total length of MST of points in each color divided by the length of an MST of the (uncolored) point set. Theorem \ref{thm:metricupperbound} can be generalized to obtain $2c-1$ as an upper bound for the MST-ratio in any metric space. For points in Euclidean space, this bound can be improved for $c\ge 4$. In particular, as $\rho_d\ge 0.577$ for every $d$, the maximum EMST-ratio is upper bounded by $\frac{c}{0.577}$. It remains unclear how the results in \cite{CDES24} on lattices can be generalized to more than $2$ colors. Furthermore, exploring the average EMST-ratio for $c$ colors is an interesting direction to pursue.
%
% ---- Bibliography ----
%
% BibTeX users should specify bibliography style 'splncs04'.
% References will then be sorted and formatted in the correct style.
%
% \bibliographystyle{splncs04}
% \bibliography{mybibliography}
%

\appendix
\section{Appendix}
\subsection{Maximum EMST-ratio of Triangular Chain}\label{subsec:zigzagproof}
In this section, we show that the maximum EMST-ratio for the triangular chain, shown in Figure~\ref{fig:BipartitevsMax} is $\frac{3k-2}{2k+1}$. Let $P$ be the point set of a triangular chain with $2k+1$ points and let us number the points of $P$ from left to right by $p_1$, $p_2$, $\ldots$, $p_{2k+1}$. 

As $|\text{EMST}(P)|=2k$, to prove the claim we show that for any proper coloring of the chain to red and blue, we have $|\text{EMST}(R)|+|\text{EMST}(B)|\le 3k-2$.
We prove this by applying induction on $k$. The claim holds trivially for $k=1$, as any proper coloring has $|\text{EMST}(R)|+|\text{EMST}(B)|=1=3-2$.

Now, assume that the claim holds for $k$. Now we prove the claim for $k+1$. Assume we have a chain with $2(k+1)+1$ points. Let $B$ and $R$ be the set of blue points and red points in a proper coloring of $P$. We distinguish two cases:

\textbf{a) If there exists $i\in \{1,3\}$, such that $p_i$, $p_{i+1}$ and $p_{i+2}$ have the same color:} W.l.o.g. we assume these points belong to $R$. 
First, consider the case $i=1$. In this case, $R\setminus \{p_1,p_2\}$ and $B$ is a proper coloring for $P\setminus \{p_1,p_2\}$. Also, $T_R:=\text{EMST}(R\setminus \{p_1,p_2\})\cup\{p_1p_2,p_2p_3\}$ is a spanning tree of $R$. By induction hypothesis $|\text{EMST}(R\setminus \{p_1,p_2\})|+|\text{EMST}(B)|\le 3k-2$, which implies that:
$$|\text{EMST}(R)|+|\text{EMST}(B)|\le |T_R|+|\text{EMST}(B)|$$ $$\le3k-2+|p_1p_2|+|p_2p_3|=3k-1+2<3(k+1)-2,$$
and hence the claim.

Now consider the case $i\neq 1$ (i.e. $i=3$). In this case, let $P'$ be the triangular chain obtained from $P$ by removing $p_1$ and $p_2$. Let $R'$ and $B'$ be a proper coloring of $P'$, where:
\begin{itemize}
    \item For every $j\ge 5$, $p_j\in R'$ iff $p_j\in R$.
    \item For every $j\in \{3,4\}$, $p_j\in R'$ iff $p_{j-2}\in R$.
\end{itemize}

Note that $R'$ and $B'$ is a proper coloring of $P'$ and hence by induction hypothesis:
$$|\text{EMST}(R')|+|\text{EMST}(B')|\le 3k-2.$$
As $i\neq 1$, then $(p_1\cup p_2)\cap B\neq \emptyset.$ We have $|\text{EMST}(R')|=|\text{EMST}(R)|-2$ and $|\text{EMST}(B')|+1\ge|\text{EMST}(B)|$. Therefore: 
$$|\text{EMST}(R)+|\text{EMST}(B)|\le |\text{EMST}(R')|+|\text{EMST}(B')|+3\le 3(k+1)-2.$$

\textbf{b) Otherwise:} W.l.o.g. assume $p_3\in R$. 
First, assume $p_1$ and $p_2$ are in $B$. Since we are not in case a), then at least one of the points $p_4$ and $p_5$ is in $B$. Therefore, $R$ and $B\setminus \{p_1,p_2\}$ is a proper coloring for $P\setminus \{p_1,p_2\}$. Now $T_B:=\text{EMST}(B\setminus \{p_1,p_2\})\cup\{p_1p_2,p_2p_i\}$ is a spanning tree of $B$, where $p_i\in \{p_4,p_5\}\cap B$.
Note that by the induction hypothesis $|\text{EMST}(B\setminus \{p_1,p_2\})|+|\text{EMST}(R)|\le 3k-2$, therefore
$$|\text{EMST}(R)|+|\text{EMST}(B)|\le |T_B|+|\text{EMST}(R)|$$ $$\le 3k-2+|p_1p_2|+|p_2p_i|=3k-1+|p_2p_i|<3(k+1)-2.$$

Now assume that there exists $j\in \{1,2\}$ such that $p_j\in R$. Since we are not in case a), then $p_{3-j}$ is blue and at least one of the points $p_4$ and $p_5$ (say $p_i$) is in $B$. Therefore, $R$ and $B\setminus \{p_1,p_2\}$ is a proper coloring for $P\setminus \{p_1,p_2\}$. Now $T_B:=\text{EMST}(B\setminus \{p_1,p_2\})\cup\{p_jp_3,p_{3-j}p_i\}$ is a spanning tree of $B$.
Note that by induction hypothesis $|\text{EMST}(B\setminus \{p_1,p_2\})|+|\text{EMST}(R)|\le 3k-2$, therefore
$$|\text{EMST}(R)|+|\text{EMST}(B)|\le |T_B|+|\text{EMST}(R)|$$ $$\le3k-2+|p_jp_3|+|p_{3-j}p_i|=3k-1+|p_{3-j}p_i|\le 3(k+1)-2.$$

% If these are the only points, then $|MST(R)|=2k$ and $|MST(B)|=1$ and hence $|MST(R)|+|MST(B)|<3(k+1)-2$.

% Let $T_R$ and $T_B$ be $MS$

\subsection{A point set with average EMST-ratio $<1$}
\label{apndx:avg<1}
In this section, we show that the average EMST-ratio for the point set described in Figure~\ref{fig:averag/averagsmall}.b is smaller than $1$.
The point set consists of $n-5$ points in a small circle through the origin whose radius is $\varepsilon \in {\mm{o}}(\frac{1}{n2^n})$ (the core), along with $5$ outer points as vertices of a regular pentagon inscribed in the unit circle around the origin. Observe that each side of this pentagon is $s=2\sin{36}\approx1.176$ and each diagonal is $d=\frac{1+\sqrt{5}}{2}s\approx1.902$.
In this example, clearly $|\text{EMST}(P)|\ge 5- {O}(n\varepsilon)$. In any bi-coloring of $P$ by red and blue if an outer point has the same color as one of the core points, then the edge connecting this outer point to a core point appears in one of the monochromatic EMSTs.
Therefore, in $2^n-64$ bi-colorings of $P$ where the core contains both colors, we have $|\text{EMST}(R)|+|\text{EMST}(B)|\le 5+{O}(n\varepsilon)$. There are only $2^6-2=62$ proper bi-colorings with a mono-chromatic core. Consider one of these $62$ colorings and assume the points in the core are red. In this case, if $0\le n_r<5$ denotes the number of outer points that are red, then $|\text{EMST}(R)|=n_r+O(n\varepsilon)$. Also, if the outer blue points are consecutive vertices of the outer pentagon, $|\text{EMST}(B)|=(4-n_r)s$ and otherwise $|\text{EMST}(B)|=(3-n_r)s+d$. So in each of these colorings $|\text{EMST}(R)|+|\text{EMST}(B)|\le d+s+2+{O}(n\varepsilon)<5.1$. Moreover, in $62-2\times5=52$ of such colorings we even have $|\text{EMST}(R)|+|\text{EMST}(B)|\le d+3+{O}(n\varepsilon)<4.91$. Thus, the average EMST-ratio for this point set is upper bounded by: 
\begin{equation*}\frac{(2^n-64)\cdot (5+ {O}(n\varepsilon))+52\cdot (4.91)+10\cdot (5.1)}{(2^n-2)\cdot (5- {O}(n\varepsilon))}<1.
\end{equation*}

\subsection{Computational Experiments}\label{sec:experiments}
This section presents the results of computational experiments comparing the average EMST-ratio, maximum EMST-ratio, and the EMST-ratio for a bipartite coloring, all conducted on random point sets. The setup is as follows. In a single experiment, we consider $n$ points selected uniformly at random in the unit square, compute the EMST-ratios for all $2^n-2$ proper colorings of this point set, and take the maximum and the average over all colorings. Additionally, we compute the EMST-ratio for a bipartite coloring of the sample point set, by computing an MST using Prim's algorithm \cite{KS07}. For $n$ between $5$ and $20$, we calculate the mean values of the functions (i.e. the average EMST-ratio, maximum EMST-ratio, and bipartite EMST-ratio) across $500$ experiments. The resulting averages are displayed in Figure~\ref{fig:threecurves}.
According to Theorem~\ref{thm:randomaverage}, the average EMST-ratio (the red curve) tends to $\sqrt{2}$. Interestingly, the diagram in Figure~\ref{fig:threecurves} suggests that a similar behavior might hold for the maximum EMST-ratio (the green curve) as well, though approaching a different limit. In fact, the program did not find any instance of a random point set with a maximum EMST-ratio greater than $2$ in our experiments.

\begin{figure}[h!]
  \centering
  \includegraphics[width=0.95\linewidth]{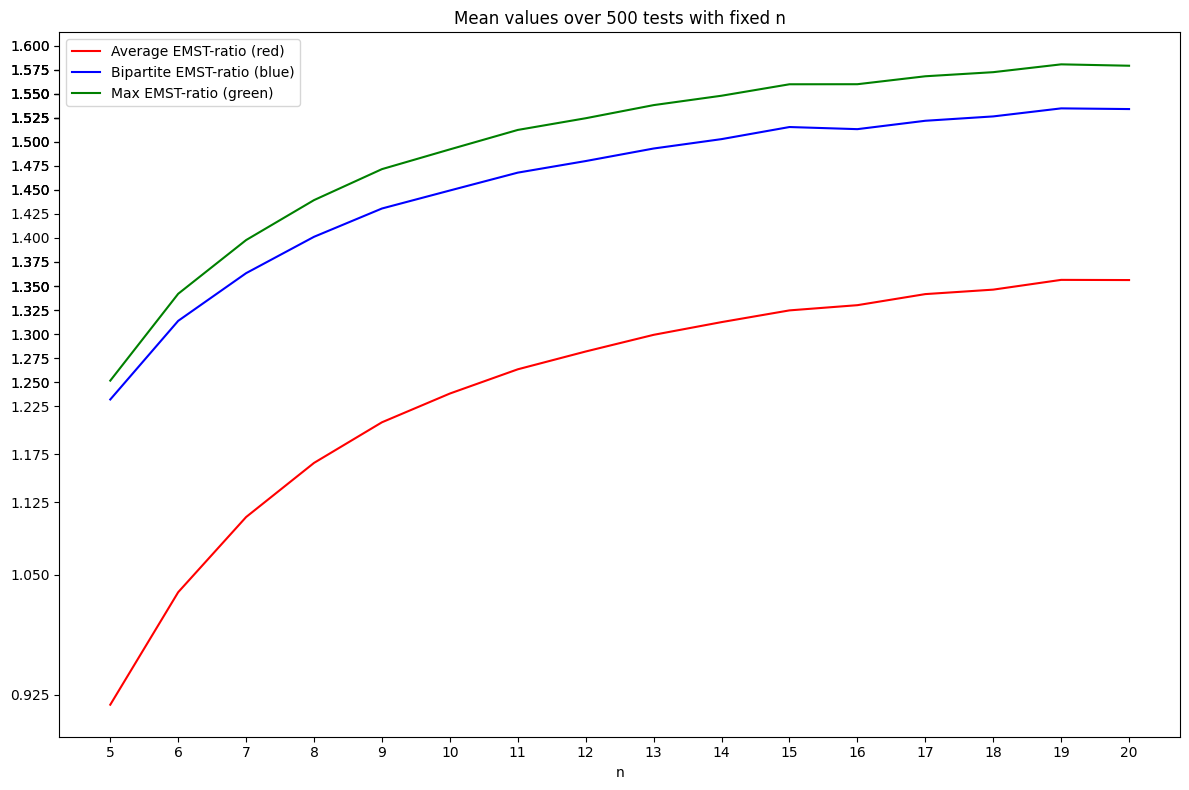}
  \caption{The mean values of the maximum EMST-ratios (green), the bipartite EMST-ratios (blue), and the average EMST-ratios (red) of $500$ random sets of $n$ points, where $5 \le n \le 20$.}
  \label{fig:threecurves}
\end{figure}

\begin{question}
    What is the limit for the expected maximum EMST-ratio of $n$ points sampled uniformly at random in $[0,1]^2$, as $n$ goes to infinity?
\end{question}

\begin{conjecture}
For $n$ random points uniformly distributed in $[0,1]^2$, the maximum EMST-ratio is less than $2$ with probability tending to $1$ as $n \rightarrow{\infty}$.
\end{conjecture}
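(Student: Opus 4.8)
The plan is to fix the random point set $P$ and to bound not the ratio directly but its numerator, writing $X_R := |\text{EMST}(R)| + |\text{EMST}(P \setminus R)|$ and $M_n := \max_{\emptyset \neq R \subsetneq P} X_R$. Since $|\text{EMST}(P)| \to \beta(2)\sqrt{n}$ with probability $1$ by (\ref{equ:beta}), it suffices to show that $M_n < (2-\delta)\,\beta(2)\sqrt{n}$ with probability tending to $1$, for some fixed $\delta > 0$. I would first dispose of the unbalanced colorings. If $|R| = k$, the mean of $X_R$ is $\beta(2)\bigl(\sqrt{k}+\sqrt{n-k}\bigr) = \beta(2)\sqrt{n}\,\bigl(\sqrt{k/n}+\sqrt{1-k/n}\bigr)$, which is maximized at $k=n/2$, equals $\sqrt{2}\,\beta(2)\sqrt{n} < 2\beta(2)\sqrt{n}$ there, and drops strictly as $k/n$ moves away from $1/2$; since the number of such colorings is $\binom{n}{k}=2^{H(k/n)\,n + o(n)}$ with the binary entropy $H$ also decreasing away from $1/2$, these colorings are harmless. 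Hence the genuine difficulty is concentrated at $k\approx n/2$, where both the count ($\approx 2^n$) and the mean of $X_R$ ($\approx \sqrt{2}\,\beta(2)\sqrt{n}$) are largest.

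For the balanced regime the natural tool is a large-deviation estimate combined with a union bound. For a fixed balanced $R$, the sets $R$ and $P\setminus R$ each consist of $n/2$ i.i.d.\ uniform points, so $E[X_R] \sim \sqrt{2}\,\beta(2)\sqrt{n}$. I would seek an upper-tail bound $\Pr[X_R > 2\beta(2)\sqrt{n}] \le \exp(-I\,n)$ for a rate $I>0$ and then union-bound over the $\binom{n}{n/2}\le 2^n$ balanced colorings. The argument closes precisely when $I > \ln 2$, since then $2^n\exp(-In)\to 0$. Equivalently, phrased through the expected maximum of sub-Gaussian variables, if each $X_R$ is sub-Gaussian with parameter $\sigma$ about its mean then $E[M_n] \le \sqrt{2}\,\beta(2)\sqrt{n} + \sigma\sqrt{2n\ln 2}$, and one needs $\sigma\sqrt{2\ln 2} < (2-\sqrt{2})\,\beta(2)$.

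The hard part will be establishing a sufficiently strong rate. Plugging the best known concentration constants for the Euclidean MST functional into the bound above does \emph{not} yield $I>\ln 2$ (equivalently, the known $\sigma$ is a constant larger than the required threshold $\approx (2-\sqrt{2})\beta(2)/\sqrt{2\ln 2}$), so a generic union bound is too lossy and new input is needed. Two promising refinements are: (i) exploiting the \emph{anti-correlation} between $|\text{EMST}(R)|$ and $|\text{EMST}(P\setminus R)|$ — an unusually long red tree forces the blue points into the gaps, where their tree tends to be short — which should make the variance of the sum $X_R$ strictly smaller than the sum of the variances and sharpen its concentration; and (ii) proving a genuine upper large-deviation principle for $|\text{EMST}|$ of $n/2$ uniform points with an explicit, large rate function, rather than a mere sub-Gaussian tail.

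A secondary technical point is the concentration of $M_n$ about its mean: the crude bounded-difference inequality is too weak, controlling only fluctuations at scale $\sqrt{n}$, so one must invoke the smoothness and superadditivity theory of Euclidean functionals (in the style of Rhee--Talagrand) to show $M_n$ inherits $O(1)$-scale concentration, after which it suffices to bound $E[M_n]$. I regard getting the large-deviation rate past the entropy barrier $\ln 2$ — via (i) or (ii) — as the crux, and as the reason the statement is posed as a conjecture rather than proved.
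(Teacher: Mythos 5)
This statement is posed in the paper as a conjecture, supported only by the experiments of Section~\ref{sec:experiments} (no sampled instance exceeded ratio $2$); there is no proof to compare against, and your proposal, as you yourself concede in the final paragraph, is a research program rather than a proof. Your reduction is set up correctly: by (\ref{equ:beta}) the denominator is $\sim\beta(2)\sqrt{n}$ a.s., the per-coloring mean of $X_R$ is maximized near $k=n/2$ at $\sim\sqrt{2}\,\beta(2)\sqrt{n}$, and a union bound over $\le 2^n$ colorings closes only if the upper tail of $X_R$ at deviation $(2-\sqrt{2})\beta(2)\sqrt{n}$ decays with rate $I>\ln 2$; with Gilbert's bound $\beta(2)\le 0.707$ this demands a sub-Gaussian parameter $\sigma\lesssim 0.35$, which no known concentration result supplies. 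That unproven rate is the gap, and it is exactly the entropy barrier you name. Note also that even constant-parameter sub-Gaussianity for $X_R$ at deviations of order $\sqrt{n}$ is nontrivial: bounded differences/Azuma gives only $\exp(-ct^2/n)$, which is vacuous at $t=\Theta(\sqrt{n})$, so you would already need Rhee--Talagrand or Kesten--Lee type constant-scale concentration \emph{with explicit constants}, and the non-monotonicity of the MST functional complicates such arguments.

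Beyond the acknowledged crux, one of your two proposed refinements is actually flawed as stated. For a \emph{fixed} coloring $R\subset\{1,\dots,n\}$ of i.i.d.\ uniform points, the red points and the blue points are independent samples, so $|\text{EMST}(R)|$ and $|\text{EMST}(P\setminus R)|$ are independent random variables and $\mathrm{Var}(X_R)$ is exactly the sum of the two variances; there is no anti-correlation to exploit at the level of a single coloring, so refinement (i) cannot sharpen the per-coloring tail used in your union bound. The heuristic you describe (a long red tree forcing the blue points into gaps) concerns the coloring chosen \emph{adversarially after seeing the points}, i.e.\ the joint behavior across colorings; exploiting it would require a chaining or second-moment argument over the correlated family $\{X_R\}$ (colorings differing in few points have nearly identical values, so the effective number of independent colorings is far below $2^n$) rather than a raw union bound. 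Finally, your dismissal of unbalanced colorings needs the quantitative inequality $I(x)>H(x)\ln 2$ uniformly in $x=k/n\in(0,1)$, where $I(x)$ is the rate at deviation $(2-\sqrt{x}-\sqrt{1-x})\beta(2)\sqrt{n}$; once explicit constants enter, the binding value of $x$ need not be $1/2$, so monotonicity remarks about the mean and the entropy do not by themselves dispose of this regime.
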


As for the bipartite coloring, Figure~\ref{fig:threecurves} clearly shows how close its EMST-ratio is to the maximum EMST-ratio in expectation. Figure~\ref{fig:bipmaxdiagram} particularly compares the EMST-ratio for a bipartite coloring computed using Prim's algorithm and the maximum EMST-ratio across $10^5$ experiments involving $5$ to $20$ points. 
Moreover, the maximum EMST-ratio is never more than $1.3$ times and is rarely more than $1.1$ times the computed bipartite EMST-ratio across all these experiments.

\begin{figure}[ht]
  \centering
  \includegraphics[width=0.5\linewidth]{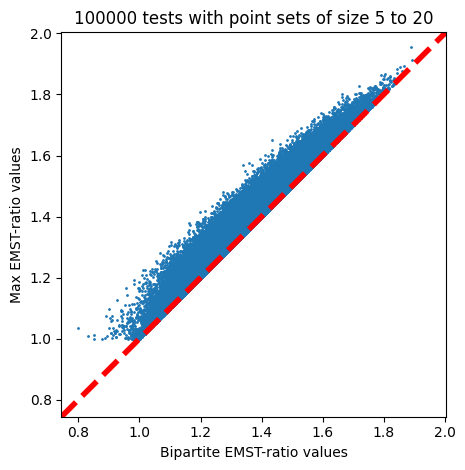}
    \caption{Each point in the diagram represents a random point cloud, with the $x$ and $y$ coordinates corresponding to a bipartite EMST-ratio of the sample point set computed using Prim's algorithm and the maximum EMST-ratio, respectively.}  \label{fig:bipmaxdiagram}
\end{figure}

\begin{conjecture}
    For $n$ random points uniformly distributed in $[0,1]^2$, the maximum EMST-ratio is less than $1.1$ times any bipartite EMST-ratio of the point set with probability tending to $1$ as $n$ goes to infinity.
\end{conjecture}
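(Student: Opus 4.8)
The plan is to reduce the comparison to limiting constants. By the Beardwood--Halton--Hammersley law (\ref{equ:beta}) with $d=2$, $|\text{EMST}(P)|/\sqrt n\to\beta(2)$ almost surely, so the shared denominator is stable and the statement reduces to comparing the two numerators $N_{\max}(P)=\max_R\big(|\text{EMST}(R)|+|\text{EMST}(B)|\big)$ and $N_{\mathrm{bip}}(P)=|\text{EMST}(R_{\mathrm{bip}})|+|\text{EMST}(B_{\mathrm{bip}})|$. The goal becomes to show $N_{\max}(P)<1.1\,N_{\mathrm{bip}}(P)$ with probability tending to $1$.

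First I would control the bipartite numerator. Almost surely the EMST $T$ is unique, has maximum degree at most six, and its colour classes are the two sides of the proper $2$-colouring of $T$; a preliminary structural step is to show that both sides have size $\Theta(n)$ with high probability, which should follow from the translation-invariant, bounded-degree, finite-range structure of the random EMST by an ergodic/local-statistics argument. Granting this, I would realise $N_{\mathrm{bip}}$ as a smooth subadditive Euclidean functional (in the sense of Steele and Yukich): monotone under insertion, geometrically subadditive up to a lower-order boundary term obtained by cutting $[0,1]^2$ into subsquares, and smooth under deletion because each colour class is spatially spread. The subadditive-Euclidean-functional machinery then gives $N_{\mathrm{bip}}(P)/\sqrt n\to c_{\mathrm{bip}}$ almost surely. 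Heuristically, if each class behaved like a uniform sample of half the points, balance would give $c_{\mathrm{bip}}=\sqrt2\,\beta(2)$, matching the average-ratio limit of Theorem~\ref{thm:randomaverage}; the actual non-uniform spreading of the classes perturbs the constant but, by the same argument, to a well-defined limit.

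The maximum numerator is the crux. Since $N_{\max}$ is an optimum over exponentially many colourings, I would again try to show it concentrates: superadditivity follows by restricting an optimal global colouring to subsquares, and an almost-matching subadditivity follows by stitching together per-subsquare optimal colourings with a boundary correction of lower order. This would yield $N_{\max}(P)/\sqrt n\to c_{\max}$ almost surely, and the target becomes the deterministic inequality $c_{\max}\le 1.1\,c_{\mathrm{bip}}$. The factor-$3$ bound of Theorem~\ref{thm:metricupperbound}, and the conjectured sub-$2$ bound for random points, are far too weak here, so I would argue structurally via a repair lemma: on a typical local neighbourhood of $T$ the bipartite colouring lags behind the optimum by at most a factor $1+o(1)$, the only bad configurations being the ``triangular-chain''-type substructures of Figure~\ref{fig:BipartitevsMax} that force the factor $1.5$, and these occur in a random EMST with vanishing density. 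Summing the local comparison over a partition of $T$ would then bound $c_{\max}/c_{\mathrm{bip}}$ strictly below $1.1$.

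The main obstacle is twofold. Establishing that the maximum-ratio functional concentrates at all is already delicate: the optimisation over colourings need not satisfy the smoothness and subadditivity hypotheses of the Euclidean-functional theory, so the stitching and boundary-correction estimates for $N_{\max}$ are the heart of the matter. Even granting both limits, pinning them tightly enough to reach the specific empirical factor $1.1$ --- rather than some weaker constant --- requires a quantitatively sharp repair lemma controlling exactly how far a bipartite colouring can lag behind the optimum on a typical random neighbourhood, and this is where essentially all the difficulty lies.
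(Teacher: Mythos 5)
This statement is one of the paper's open conjectures: the paper offers no proof of it at all, only the empirical evidence of Section~\ref{sec:experiments} (where, in finite samples, the ratio between the maximum and the bipartite EMST-ratio was observed to stay below $1.3$ and rarely exceed $1.1$). Indeed, the paper does not even claim that the limit of the expected maximum EMST-ratio exists --- that is posed separately as an open question --- so there is no paper argument to compare yours against, and your proposal must be judged as a research programme on its own terms.

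As such, it has genuine gaps at each of its three load-bearing steps. First, the Steele--Yukich subadditive-functional machinery does not obviously apply to $N_{\mathrm{bip}}$: the bipartition is a global parity object determined by the (a.s.\ unique) EMST, and inserting or deleting a single point can rewire the tree so that an entire macroscopic subtree swaps colour. So the ``smoothness'' and geometric-subadditivity hypotheses fail for the colouring itself, and you would have to argue that the \emph{sum} $|\text{EMST}(R_{\mathrm{bip}})|+|\text{EMST}(B_{\mathrm{bip}})|$ is nevertheless stable under such recolourings --- this is plausible but unproven and is not a routine application of the theory. Second, for $N_{\max}$ the superadditivity direction (restricting an optimal colouring to subsquares) is fine, but the near-matching subadditivity --- stitching per-subsquare optimal colourings with only lower-order boundary cost --- is precisely the hard open problem; per-square optima need not be compatible across boundaries, and you give no mechanism for merging two locally optimal bichromatic forests without losing a constant factor. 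Third, and most fundamentally, even granting both limits $c_{\max}$ and $c_{\mathrm{bip}}$, nothing in your sketch produces the specific constant $1.1$: the ``repair lemma'' asserting that bipartite colourings lag by $1+o(1)$ outside triangular-chain-type configurations (Figure~\ref{fig:BipartitevsMax}), and that such configurations have vanishing density in a random EMST, is stated without proof, and a positive density of even mildly suboptimal local patterns would already push the ratio above $1+\delta$ for some fixed $\delta$ that you cannot control with the tools named. The constant $1.1$ in the conjecture is empirical, and your plan contains no quantitative estimate that could single it out; the only rigorous upper bounds available in the paper (Theorem~\ref{thm:metricupperbound}, or $2/\rho_2$) are far weaker, as you yourself note. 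One small point that does work: since $n$ uniform points a.s.\ have pairwise-distinct interpoint distances, the EMST is a.s.\ unique and a tree has a unique proper $2$-colouring up to swapping the colours, so reducing ``any bipartite EMST-ratio'' to a single bipartite numerator is legitimate.
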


\begin{figure}[ht]
  \centering
  \includegraphics[width=0.5\linewidth]{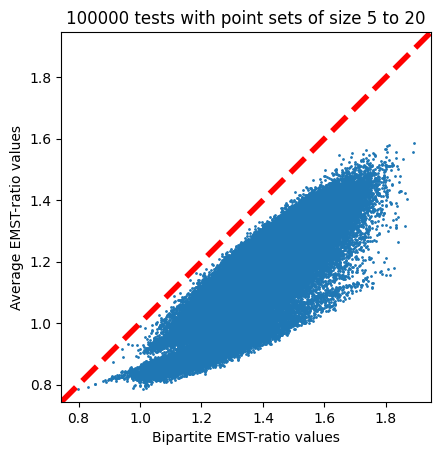}
    \caption{In this diagram each point represents a random point cloud, with the $x$ and $y$ coordinates corresponding to a bipartite EMST-ratio of the sample point set computed using Prim's algorithm and the average EMST-ratio, respectively.}
 \label{fig:avebipdiagram}
\end{figure}
Figure~\ref{fig:avebipdiagram}, on the other hand, presents a comparison between a bipartite EMST-ratio computed using Prim's algorithm and the average EMST-ratio obtained from $10^5$ experiments with point sets ranging from $5$ to $20$ points. 
Interestingly, in all computed experiments without exception, the computed bipartite EMST-ratio exceeds the average EMST-ratio. This suggests that bipartite colorings of a point set provide better EMST-ratios than what is expected from a random coloring. Although this is not always the case for finite point sets, we believe that there are very few counterexamples.

\begin{question}
    Can the point sets in $\Rspace^2$ for which a bipartite EMST-ratio of them is smaller than the average EMST-ratio be characterized?
\end{question}
\end{document}